\newcommand{\polylog}{\mathrm{polylog}}
\newcommand{\floor}[1]{{\lfloor#1\rfloor}}
\newcommand{\ceil}[1]{{\lceil#1\rceil}}
\newtheorem{theorem}{Theorem}[section]
\newtheorem{corollary}[theorem]{Corollary}
\newtheorem{proposition}[theorem]{Proposition}
\newtheorem{definition}[theorem]{Definition}
\newtheorem{lemma}[theorem]{Lemma}
\title{Sorting Networks On Restricted Topologies}
\newcommand{\FormatAuthor}[2]{
\begin{tabular}{c}
#1  \\ {\small #2}
\end{tabular}
}
\author{
\begin{tabular}[h!]{lcr}
   \FormatAuthor{Avah Banerjee}{George Mason University}
&   \FormatAuthor{Dana Richards}{George Mason University}
&   \FormatAuthor{Igor Shinkar}{UC Berkeley}
\end{tabular}
}
\begin{document}

\maketitle

\begin{abstract}
The \textit{sorting number} of a graph with $n$ vertices is the minimum depth
of a sorting network with $n$ inputs and $n$ outputs that uses only the edges
of the graph to perform comparisons.
Many known results on sorting networks can be stated in terms of sorting
numbers of different classes of graphs. In this paper we show the
following general results about the sorting number of graphs.
\begin{enumerate}
  \item Any $n$-vertex graph that contains a simple path of length $d$ has a sorting network of depth $O(n \log(n/d))$.
  \item Any $n$-vertex graph with maximal degree $\Delta$ has a sorting network of depth $O(\Delta n)$.
\end{enumerate}
We also provide several results relating the sorting number of a graph with its routing number,
size of its maximal matching, and other well known graph properties. Additionally, we give
some new bounds on the sorting number for some typical graphs.
\end{abstract}

%%%%%%%%%%%%%%%%%%%%%%%%%%%%%%%%%%%%%%%%%%%%%%%%%
\section{Introduction}
%%%%%%%%%%%%%%%%%%%%%%%%%%%%%%%%%%%%%%%%%%%%%%%%%

In this paper we study oblivious sorting algorithms.
These are sorting algorithms whose sequence of comparisons is made in advance, before seeing the input,
such that for any input of $n$ numbers the value of the $i$'th output is smaller or equal to
the value of the $j$'th output for all $i < j$.
That is, for any permutation of the input out of the $n!$ possible, the output of the algorithm must be sorted.
A sorting network, which typically arises in the context of parallel algorithms,
is an oblivious algorithm where the comparisons are grouped into {\em stages},
and in each stage the compared pairs are disjoint.
In this paper we explore the situation where a given graph specifies which keys are allowed to be compared.
We regard a sorting network as a sequence of stages, where each stage corresponds to a matching in the graph
and a comparator is assigned to each matched pair.
There are fixed locations, each containing a key,
and a comparator looks at the keys at the endpoints of the edges of the matching,
and swap them if they are not in the order desired by the underlying oblivious algorithm.
Therefore, we say that the underlying algorithm induces a {\em directed matching}.
The locations are ordered, and the goal is to have the order of the keys match the order of the locations after the execution of the algorithm.
The {\em depth} of a sorting network is the number of stages, and the {\em size} is the total number of edges in all the matchings.
Note that for an input of length $n$
at most $\floor{n/2}$ comparisons can be performed in each step,
and hence the well-known lower bound of $\Omega(n \log(n))$ comparisons
in the sequential setting implies a $\Omega(\log(n))$ lower bound
on the depth on the network, that is, the number of stages in the network.

A large variety of sorting network have been studied in the literature.
In their seminal paper, Ajtai, Koml{\'o}s, and Szemer{\'e}di~\cite{ajtai19830}
presented a construction of a sorting network of depth $O(\log {n})$. We will refer to it
as the \emph{AKS sorting network}.
In this work we explore the question of constructing a sorting network
where we are given a graph specifying which keys are allowed to be compared.
We define the {\it sorting number} of a graph $G$, denoted by $st(G)$,
to be the minimal depth of a sorting network that uses only the edges of $G$.
The AKS sorting network can be interpreted as a sorting network on the complete graph,
i.e., $st(K_n) = O(\log(n))$.
More precisely, the AKS construction specifies \emph{some} graph $G_{AKS}$ whose maximal degree is $O(\log(n))$
and $st(G_{AKS}) = O(\log(n))$.

The setting where the comparisons are restricted to the $n$-vertex path graph, denoted by $P_n$,
is perhaps the easiest case. It is well known that $st(P_n) = n$,
which follows from the fact that the odd-even transposition sort takes $n$ matching steps
(see, e.g.,~\cite{knuth1998art}).
For the hypercube graph $Q_d$ on $n = 2^d$ vertices we can use the Batcher's bitonic sorting network,
which has a depth of $O((\log n)^2)$~\cite{batcher1968sorting}.
This was later improved to $2^{O(\sqrt{\log\log n})}\log n$ by Plaxton and Suel~\cite{leighton1998hypercubic}.
We also have a lower bound of $ \Omega(\frac{\log n \log\log n}{\log \log \log n})$
due to Leighton and Plaxton~\cite{plaxton1994super}.
For the square mesh $P_n \times P_n$ it is known that $st(P_n \times P_n)  = 3n + o(n)$,
which is tight with respect to the constant factor of the largest term. This follows
from results of Schnorr and Shamir~\cite{schnorr1986optimal}, where they introduced
the $3n$-sorter for the square mesh. We also have a tight result for the general $d$-dimensional
mesh of $\Theta(dn)$ due to Kunde~\cite{kunde1987optimal}. These results are, in fact,
more general, as they apply to meshes with non-uniform aspect ratios.

\begin{table}[h]
 	\centering % used for centering table
 		\begin{tabular}{c c c c} % centered columns (4 columns)
 			\hline\hline %inserts double horizontal lines
 			Graph & Lower Bound & Upper Bound & Remark \\ [0.5ex] % inserts table
 			%heading
 			\hline % inserts single horizontal line
 			Complete Graph ($K_n$) & $\log n$ & $O(\log n)$ & AKS Network~\cite{ajtai19830}\\ % inserting body of the table
 			Hypercube ($Q_n$) & $ \Omega(\frac{\log n \log\log n}{\log \log \log n}) $&  $2^{O(\sqrt{\log\log n})}\log n$ & Plaxton et.~al\cite{leighton1998hypercubic,plaxton1994super} \\
 			Path ($P_n$) & $n-1$ & $n$ & Odd-Even Trans.~\cite{knuth1998art} \\
 			Mesh ($P_n \times P_n$) & $3n - 2{\sqrt{n}}-3$ & $3n+O(n^{3/4})$ & Schnorr \& Shamir~\cite{schnorr1986optimal}\\
 			$d$-dimensional Mesh & $\Omega(dn)$ & $O(dn)$ &Kunde~\cite{kunde1987optimal} \\
 			{\bf Tree} &   & $ O(\min(\Delta n,n \log {(n/d)}))$& This paper \\
 			\textbf{$d$-regular Expander}& $\Omega(\log n)$ & $O(d\log^3(n))$ & This paper\\[1ex] % [1ex] adds vertical space
 			\textbf{Complete $p$-partite} & $\Omega(\log n)$ & $O(\log n)$ & This paper \\
            \textbf{Graph} ($K_{n/p,\ldots,n/p}$)  			 \\
            {\bf Pyramid $(d, N)$}& & $O(dN^{1/d})$& This paper\\
        \hline %inserts single line
    \end{tabular}
\label{table:st known results}
 	\caption{Known Bounds On The Sorting Numbers Of Various Graphs} % title of Table
\end{table}

%%%%%%%%%%%%%%%%%%%%%%%%%%%%%%%%%%%%%%%%%%%%%%%%%
\section{Definitions}
%%%%%%%%%%%%%%%%%%%%%%%%%%%%%%%%%%%%%%%%%%%%%%%%%

Formally, we study the following restricted variant of sorting networks.
We begin by taking a graph $G=(V,E)$, where the vertices correspond to the locations
of an oblivious sorting algorithm, $V=\{1,2,\ldots , |V|\}$.
The keys will be modeled by weighted pebbles, one per vertex.
Let a \textit{sorted order} of $G$ be given by a permutation $\pi$
that assigns the rank $\pi(i)$ to the vertex $i \in V$.
The edges of $G$ represent pairs of vertices where the pebbles can be compared and/or swapped.
Given a graph $G$ the goal is to design a sorting network that uses only the edges of $G$.
We formally define such a sorting network.

\begin{definition}[Sorting Network on a Graph]\label{def:sorting network}
A sorting network is a triple $\mathcal{S} (G ,\mathcal{M}, \pi)$ such that:
\begin{enumerate}
    \item $G=(V,E)$ is a connected graph with a bijection $\pi:V \to \{1, \dots, |V|\}$  specifying the sorted order on the vertices.
    Initially, each vertex of $G$ contains a pebble having some value.
	\item $\mathcal{M} = (M_i,\dots,M_t)$ is a sequence of matchings in $G$,
    for which some edges in the matching are assigned a direction.
    Sorting occurs in stages.
    At stage $i$ we use the matching $M_i \in \mathcal{M}$ to exchange the
    pebbles between matched vertices according to their orientation.
    For an edge $\overrightarrow{uv}$, when swapped the smaller of the
    two pebbles goes to $u$. If an edge is undirected then the pebbles
    swap regardless of their order.
    \item After $|\mathcal{M}|$ stages the vertex labeled $i$ contains the pebble whose rank is $\pi(i)$ in the sorted order.
        We stress that this must hold for all ($n!$) initial arrangement of the pebbles.
        $|\mathcal{M}|$ is called the \emph{depth} of the network.
%%	\item Additionally, $H$ is minimal, so  each edge in $H$ is in some matching.
\end{enumerate}
\end{definition}

\begin{definition}[Sorting Number]\label{def:sorting number}
Let $G$ be a graph, and let $\pi$ be a sorted order of $G$.
Define $st(G,\pi)$ to be the minimum depth of a sorting network $\mathcal{S}(G, \mathcal{M}, \pi)$.
The \emph{sorting number} of $G$, denoted by $st(G)$, is defined as the minimum depth of any sorting network on $G$,
i.e., $st(G) = \min_{\pi} st(G,\pi)$.
\end{definition}

%%%%%%%%%%%%%%%%%%%%%%%%%%%%%%%%%%%%%%%%%%%%%%%%%
\section{Our Results}
%%%%%%%%%%%%%%%%%%%%%%%%%%%%%%%%%%%%%%%%%%%%%%%%%

The AKS sorting network can be  trivially converted
into a network of depth $O(n \log(n))$ by making a single comparison in each round.
However, it is not clear a-priori whether for any graph there is a network
of depth $O(n \log(n))$. We show that this bound indeed holds for all graphs.

\begin{theorem}\label{thm:st bound diam}
	Let $G$ be an $n$-vertex graph, and suppose that $G$ contains a simple path of length $d$.
    Then $st(G) = O(n \log{(n / d)})$.
    In particular, for every $n$-vertex graph $G$ it holds that $st(G) = O(n \log(n))$.
\end{theorem}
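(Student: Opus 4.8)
The plan is to build the network at two levels of granularity, using the long path both as a workspace for merging and as a conveyor for moving pebbles. First observe the weak bound, which already yields the ``in particular'' clause. By \cite{ajtai19830} there is a comparator network of depth $O(\log n)$ sorting $n$ keys, and each of its $O(\log n)$ layers is a matching of at most $n/2$ comparisons between arbitrary pairs. Since $G$ is connected it has a spanning tree, and any permutation — in particular one that brings every pair we wish to compare onto a single edge — can be realized on a tree in $O(n)$ matching steps. Alternating ``route the pairs together, compare along the now-adjacent edges, continue'' realizes each AKS layer in $O(n)$ steps, giving $st(G)=O(n\log n)$, i.e.\ the $d=1$ case.

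To exploit a path $P=v_0v_1\cdots v_d$ of length $d$ I would coarsen the problem. Set $s=\floor{(d+1)/2}$ and partition the $n$ positions into $k=\ceil{n/s}=O(n/d)$ \emph{blocks} of size $s$, chosen so that any two blocks together fit on $P$ (when $d=\Omega(n)$ the whole input already fits within two path-lengths and odd--even transposition \cite{knuth1998art} sorts it directly in $O(n)$ steps, so assume $d\le n/2$). First pre-sort each block individually by loading it onto $P$ and running odd--even transposition sort, at cost $O(d)$ per block and $O(kd)=O(n)$ in total. Then run a depth-$O(\log k)=O(\log(n/d))$ sorting network (AKS again) on the $k$ blocks regarded as super-items, replacing every comparator by a \emph{merge--split}: route the two blocks onto $P$, odd--even transposition the $2s\le d+1$ pebbles so they lie sorted along $P$, and keep the lower $s$ as the new ``min'' block and the upper $s$ as the new ``max'' block. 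Each merge--split costs $O(d)$, and the $O(k)$ merge--splits of one block-network layer are run sequentially on the single shared path for $O(kd)=O(n)$ per layer, hence $O(n\log(n/d))$ overall.

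Two points need genuine work, and the routing is the harder one. First, correctness of the substitution: I must show that replacing comparators by merge--splitters in a sorting network sorts the blocks into the globally sorted staircase whenever each block starts sorted. I would prove this by a block version of the $0$--$1$ principle, observing that once the blocks are sorted every splitter acts as a genuine merge, and then tracking the sorted staircase directly on $0$--$1$ inputs; the tempting reduction ``a sorted block behaves like a single key equal to its number of ones'' is \emph{false} (merge--split does not act as scalar $\min/\max$ on the counts), so the argument cannot simply invoke the ordinary $0$--$1$ principle. Second, and this is where I expect the main obstacle to be, the theorem promises only a path of length $d$ together with connectivity, so in adversarial hosts — e.g.\ a ``broom'' in which all $n-d-1$ off-path vertices hang off one endpoint of $P$ — the off-path pebbles must stream onto $P$ through a thin interface. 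The crux estimate is that, using $P$ as a length-$d$ conveyor inside a spanning tree of $G$ that contains $P$, each group of $\Theta(d)$ pebbles required for a merge--split can be loaded and later unloaded in $O(d)$ steps by pipelining, keeping the per-layer cost at $O(n)$. Making this routing bound hold for \emph{every} connected host graph, rather than only for graphs of small routing number, is the delicate step, and it is what forces the block layout to be tied to the structure of $P$.
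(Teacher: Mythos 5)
Your main construction is, in outline, exactly the paper's: coarsen the positions into $\Theta(n/d)$ blocks of size $\Theta(d)$, run a block-level oblivious sorting algorithm whose $O((n/d)\log(n/d))$ block-comparisons are executed sequentially, and realize each block-comparison by bringing the two blocks onto the path and sorting their union with odd--even transposition in $O(d)$ rounds. (The paper packages this as Theorem~\ref{thm:st using a subgraph} with $H$ a path, and it remarks that AKS's parallelism buys nothing here since there is only one copy of $H$, so your use of AKS at the block level is immaterial.) Your first flagged worry, correctness of the comparator-to-merge--split substitution, is legitimate but is shared with the paper, which asserts it without proof inside Theorem~\ref{thm:st using a subgraph}; note also that since you realize each block operation by a \emph{full} odd--even transposition sort of the $2s$ pebbles, your operation and the paper's ``sort $A_i\cup A_j$ and split'' are identical at the multiset level, and pre-sorting the blocks is unnecessary.

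The genuine gap is the one you name and then leave open: the claim that in an arbitrary connected host, any group of $\Theta(d)$ pebbles can be loaded onto the path (and later cleared) in $O(d)$ matching rounds. You call this ``the crux'' and offer only ``pipelining'' as a hope; nothing in the proposal establishes it, and without it the $O(d)$ cost per block-comparison, hence the entire bound, is unsupported. The paper closes precisely this hole with Lemma~\ref{lemma:routing to longest path}: pass to a spanning tree $T$ whose diameter $d'$ is at least $d$, take $P$ to be a \emph{diametral} path of $T$, and prove by induction on $k$ that any $k\le d'$ pebbles can be routed to any $k$ prescribed vertices of $P$ in $d'+2(k-1)\le 3d'$ rounds: the $k-1$ pebbles closest to $P$ are routed by induction, and the farthest pebble trails along its shortest path two steps behind, the distance assumption guaranteeing it never collides with the others. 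Note that the diametral choice matters: for an arbitrary path of length $d$ inside a tree of much larger diameter (your broom, with the bristles far from $P$), a per-group loading cost of $O(d)$ is simply false, so your block layout must be tied to a diametral path of the spanning tree, not to the given $P$. Two smaller defects: your warm-up argument for the $O(n\log n)$ clause assumes each AKS layer's pairs can be made \emph{simultaneously} adjacent, which fails for graphs with no large matching (the star), and is in any case the very statement the theorem is proving; and your shortcut for $d=\Omega(n)$ (``odd--even transposition sorts it directly'') ignores the $n-d-1$ off-path pebbles, which still require the routing lemma to reach $P$.
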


This bound is tight for the star graph $K_{1,n-1}$ as at most 1 comparison can be made per round,
and hence $st(K_{1,n-1}) = \Theta(n \log(n))$.

If the maximal degree of $G$ is small, it possible to show a
better upper bound on $st(G)$.

\begin{theorem}\label{thm:st bound max deg}
    Let $G$ be an $n$-vertex graph with maximal degree $\Delta$. Then $st(G) = O(\Delta n)$.
\end{theorem}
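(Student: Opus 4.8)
The plan is to reduce to the case of a tree and then generalize the odd-even transposition sort. \textbf{Reduction to a spanning tree.} Since any sorting network that uses only the edges of a spanning tree $T$ of $G$ is in particular a sorting network on $G$, and since $T$ has maximal degree at most $\Delta$, it suffices to prove $st(T) = O(\Delta n)$ for a tree $T$ on $n$ vertices. Fix $T$, root it, and let $\pi$ be a DFS pre-order, so that for every $k$ the prefix $P_k = \{v : \pi(v) \le k\}$ induces a connected subtree (the visited set of a DFS is always connected). Orient every edge $\{u,v\}$ of $T$ from its lower-$\pi$ endpoint to its higher-$\pi$ endpoint, so that a compare-exchange along it moves the smaller pebble toward the root. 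Since $T$ is bipartite with maximal degree $\Delta$, by K\"onig's edge-coloring theorem its edges decompose into $\Delta$ matchings $M_1, \dots, M_\Delta$. Define one \emph{round} to be the application of the directed matchings $M_1, \dots, M_\Delta$ in sequence. I will show that $O(n)$ rounds sort $T$, which yields depth $O(\Delta n)$.

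\textbf{Invariant via the 0--1 principle.} By the 0--1 principle it suffices to sort every $0/1$ input, and I will track, for each threshold $j$, the quantity $d_j(t)$ equal to the number of $1$-pebbles lying inside $P_j$ after $t$ rounds; the configuration is sorted exactly when $d_j(t) = d_j^\ast := \max(0, j-k)$ for all $j$, where $k$ is the number of $0$'s. The key monotonicity observation is that each $d_j(t)$ is non-increasing in $t$: every boundary edge of $P_j$ is oriented out of $P_j$ (its $P_j$-endpoint has the smaller $\pi$-value), so a compare-exchange across it can only move a $0$ into $P_j$ or a $1$ out of $P_j$, never the reverse, while comparisons internal to $P_j$ leave $d_j$ unchanged. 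Thus the $0$-pebbles migrate monotonically toward the root and can never retreat; sorting reduces to showing that this migration completes quickly.

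\textbf{Main obstacle: bounding the number of rounds.} The remaining and principal difficulty is to show that only $O(n)$ rounds are needed for every $d_j$ to reach its target. The naive potential $\sum_j (d_j(t) - d_j^\ast)$ can be as large as $\Theta(n^2)$ and may drop by only a constant per round, so a cruder count is insufficient. Instead I would argue, in the spirit of the classical analysis of odd-even transposition sort, that each misplaced $0$ advances one edge toward the root in every round in which it is not blocked, and that the blocked $0$'s form a contiguous ``traffic jam'' along a root-ward path whose front always advances; a charging argument then bounds the total delay of the last pebble to settle by $O(n)$. When $T$ is the path $P_n$ this scheme is exactly odd-even transposition sort and the bound specializes to the known $st(P_n) = n$, which is the sanity check I would use to calibrate the potential. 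Making the traffic-jam charging argument precise on a branching tree --- where several jams may merge at a common ancestor --- is the step I expect to require the most care.
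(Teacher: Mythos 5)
There is a genuine gap, and it is not the deferred ``traffic jam'' charging argument: your scheme fails before that point, because root-ward comparators on a branching tree can deadlock. Take $T = K_{1,n-1}$, the star with center $c$. In any DFS pre-order $\pi(c)=1$, so every edge is oriented with the smaller pebble going to $c$. Consider the $0/1$ input in which $c$ holds a $0$, one leaf with $\pi$-value greater than $2$ holds the second $0$, and all other leaves hold $1$'s. Every comparator $(c,v)$ compares the $0$ at $c$ against the pebble at $v$ and sends the smaller pebble to $c$, i.e., changes nothing. Hence this unsorted configuration is a fixed point of every one of your matchings $M_1,\dots,M_\Delta$, and no number of rounds sorts it. This is precisely the failure mode your monotonicity invariant cannot exclude: each $d_j(t)$ is indeed non-increasing, but it can freeze strictly above $d_j^\ast$. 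The issue is specific to branching: two leaves in different subtrees can only exchange pebbles through their common ancestor, and once that ancestor holds a locally minimal pebble, comparators oriented toward the root never move anything again. On the path your scheme is odd-even transposition sort and does work; the step from path to tree is not merely ``the step requiring the most care'' --- with this orientation it is impossible.

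The paper's proof also reduces to a spanning tree and also reuses odd-even transposition sort, but it first \emph{linearizes} the tree: it takes a contour (closed DFS walk) $\Gamma$ on $2n-1$ vertices, selects one representative occurrence of each vertex of $T$ so that consecutive representatives are at most $3$ apart along $\Gamma$, and declares the sorted order to be the order of the representatives along $\Gamma$ (not a pre-order). Each comparison of the odd-even transposition sort on this virtual path is implemented by at most $5$ steps in $T$: \emph{unconditional} swaps that bring the two pebbles adjacent, one genuine comparison, and swaps back. Since each vertex of $T$ lies in at most $\Delta$ of the intervals of $\Gamma$ used in a round, a proper coloring of the conflict graph with $4\Delta-3$ colors splits each round into sub-rounds whose operations project to vertex-disjoint parts of $T$ and so can run in parallel, giving depth at most $5(4\Delta-3)n = O(\Delta n)$. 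The essential difference from your plan is that pebbles travel in \emph{both} directions across tree edges (via the unconditional swaps), and the target order is the contour order; this is exactly what lets the $n$-round guarantee for the path carry over and avoids the deadlock above. If you want to salvage your approach, you would have to abandon the fixed root-ward orientation and build something equivalent to this contour simulation.
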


Next, we relate the sorting number of a graph to it routing number, and the size of its maximal matching.

\begin{theorem}\label{thm:st < n log(n)  rt(G)/nu(G)}
	Let $G$ be an $n$-vertex graph with routing number $rt(G)$ and matching of size $\nu(G)$.
    Then $st(G) = O(n \log(n) \cdot \frac{rt(G)}{\nu(G)})$.
\end{theorem}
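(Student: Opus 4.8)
The plan is to simulate the AKS sorting network on the graph $G$ using routing to realize each comparison. The AKS network has depth $D = O(\log n)$ on the complete graph, and each of its layers is a set of up to $n/2$ pairwise-disjoint comparison pairs. The obstruction is that a single matching step in $G$ can realize at most $\nu(G)$ comparisons, so I cannot implement a full AKS layer at once. Instead I would split each AKS layer into $O(n/\nu(G))$ batches of at most $\nu(G)$ disjoint pairs, and implement each batch by routing the relevant pebbles onto the edges of a fixed maximum matching of $G$, comparing there, and routing back.

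Concretely, fix a maximum matching $M^\ast = \{e_1,\ldots,e_\nu\}$ of $G$ with $\nu = \nu(G)$, where $e_j$ has endpoints $u_j,v_j$. Let $C_\ell$ be the set of comparison pairs in the $\ell$-th AKS layer, so $|C_\ell| \le n/2$, and partition $C_\ell$ into $m = \lceil |C_\ell|/\nu \rceil = O(n/\nu)$ batches, each a set of at most $\nu$ disjoint position-pairs. For a batch $B = \{(a_1,b_1),\ldots,(a_k,b_k)\}$ with $k \le \nu$, define a permutation $\rho_B$ of $V$ that sends position $a_i$ to $u_i$ and $b_i$ to $v_i$ for each $i \le k$ and acts injectively (arbitrarily) on the remaining positions. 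Being a permutation, $\rho_B$ is realizable by undirected swaps in $rt(G)$ matching steps.

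Each batch $B$ would then be carried out in three phases: route $\rho_B$ in $rt(G)$ steps to place each pair on an edge of $M^\ast$; perform one comparison step along $e_1,\ldots,e_k$, orienting each edge so that the minimum lands on whichever endpoint corresponds to the position that AKS designates to receive the minimum; then route $\rho_B^{-1}$ back in $rt(G)$ steps. After these $2\,rt(G)+1$ steps every pebble is returned to its original vertex, with exactly the comparison-exchanges of $B$ applied. Summing over the $m = O(n/\nu)$ batches gives $O(rt(G)\cdot n/\nu)$ steps per AKS layer, and over the $D = O(\log n)$ layers yields total depth $O\bigl(n\log n \cdot rt(G)/\nu(G)\bigr)$. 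Since the batches of a layer jointly reproduce the whole layer and each batch acts only on its intended disjoint pairs, the network sorts every input, inheriting correctness from AKS.

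The main point to verify carefully is that routing, which permutes pebbles by position rather than by value, composes cleanly with the value-dependent comparison step. Because the routing phases use only undirected swaps, they are value-oblivious: they merely relocate whatever pebble currently occupies a given vertex, so applying $\rho_B$ and then $\rho_B^{-1}$ restores each pebble to its starting position while the single intervening comparison step acts on precisely the intended pairs. The only remaining bookkeeping is the edge orientation in the middle phase, chosen so that ``minimum goes to $u_j$'' after routing matches ``minimum goes to $a_i$'' in the AKS layer; this is fixed once $\rho_B$ is specified.
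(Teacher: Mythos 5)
Your proposal is correct and follows essentially the same approach as the paper: simulate the AKS network layer by layer, splitting each layer into $O(n/\nu(G))$ batches of at most $\nu(G)$ disjoint pairs, routing each batch onto a fixed matching of $G$ with value-oblivious (undirected) swaps, and performing the comparisons there. The only difference is cosmetic: you route back after each batch ($2\,rt(G)+1$ steps), whereas the paper leaves the pebbles in place and tracks a relabeling of the vertices ($rt(G)+1$ steps per batch), which changes nothing asymptotically.
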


Next, we upper bound $st(G)$ for graphs $G$ that contain a large subgraph $H$ whose $st(H)$ is small.

\begin{theorem}\label{thm:st using a subgraph}
    Let $G$ be an $n$-vertex graph, and
	let $H$ be a vertex-induced subgraph of $G$ on $p$ vertices. Then
    \[
        st(G) = O\left(\frac{n}{p} \log(\frac{n}{p}) \cdot (rt(G) + st(H))\right).
    \]
\end{theorem}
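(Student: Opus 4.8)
The plan is to reduce the problem to building an \emph{oblivious} sorting network for $n$ keys whose gates are ``$p$-sorters'' --- each gate sorts the $p$ keys currently held at a fixed set of $p$ positions --- and then to emulate each gate on $G$ using $H$, exactly mirroring the way Theorem~\ref{thm:st < n log(n)  rt(G)/nu(G)} emulates a matching of comparators, but with ``compare a batch of pairs'' replaced by ``sort a batch of $p$ keys on $H$''. A single $p$-sorter acting on a fixed position set is implemented on $G$ by (i) routing the relevant $p$ pebbles onto the $p$ vertices of $H$ via a fixed permutation, at cost $rt(G)$, and (ii) running a depth-$st(H)$ sorting network on $H$; the return routing is folded into the routing step of the next gate. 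Hence if there is an oblivious $p$-sorter network with $S$ gates, then $st(G) = O(S\cdot(rt(G)+st(H)))$, and it remains to show that $n$ keys can be sorted obliviously with $S = O(\frac{n}{p}\log(\frac{n}{p}))$ $p$-sorters.

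To build such a network I would start from the AKS network, which sorts $n$ keys in $O(\log n)$ layers, each a matching of at most $n/2$ comparators; by the standard ``untangling'' transformation I may assume it is \emph{standard}, i.e.\ every comparator sends its smaller key to the endpoint of lower index. I then replace, layer by layer, each group of at most $p/2$ comparators (touching at most $p$ positions) by one $p$-sorter that sorts those positions into increasing index order. This covers each layer by $O(n/p)$ gates, giving $O(\frac{n}{p}\log n)$ gates overall, which already matches the target whenever $p \le n^{1-\Omega(1)}$.

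Correctness of this replacement is the conceptual core, and I would argue it through a monotonicity lemma rather than the na\"ive (and false) claim that replacing comparators by block merge--split gates preserves sorting. Order $0$-$1$ vectors by $u \preceq w$ iff every prefix of $u$ contains at least as many $1$'s as the corresponding prefix of $w$, so that the ascending-sorted vector with a given number of $1$'s is the unique $\preceq$-minimum. Two facts suffice: every standard comparator $c$ satisfies $c(y)\preceq y$, and every standard comparator is monotone for $\preceq$; consequently every $p$-sorter (a composition of standard comparators) is monotone, and fully sorting a position set is ``more sorting'' than any matching on a subset of it, hence produces a $\preceq$-smaller vector for every input. A layer-by-layer induction then shows the state of the modified network stays $\preceq$ the state of AKS, and since AKS terminates at the $\preceq$-minimum, so does the modified network; the $0$-$1$ principle (a sorter network is a comparator network) finishes the argument. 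I expect the monotonicity claim for a single standard comparator to need a short but careful case analysis on the prefix sums straddling the comparator's two endpoints.

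Finally, to sharpen $\log n$ to $\log(\frac{n}{p})$ in the regime $p = n^{1-o(1)}$, I would avoid running AKS all the way down: use only the top $O(\log(\frac{n}{p}))$ levels of the expander-based construction to drive each key into its correct block of $p$ consecutive ranks (an $O(n\log(\frac{n}{p}))$-comparator ``splitting'' network), then finish with one $p$-sorter per block, for $O(\frac{n}{p})$ extra gates; grouping the splitting comparators into $p$-sorters as above keeps the count at $O(\frac{n}{p}\log(\frac{n}{p}))$. The main obstacle I anticipate is precisely this step --- controlling the AKS/expander internals so that $O(\log(\frac{n}{p}))$ halving levels provably place every key in the right size-$p$ block while the comparator-to-$p$-sorter grouping remains valid. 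The coarser bound $O(\frac{n}{p}\log n)$, which already yields the theorem for all $p$ up to $n^{1-\Omega(1)}$, is substantially easier and can serve as a fallback.
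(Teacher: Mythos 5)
Your opening reduction is sound and is in fact the paper's strategy too: a ``$p$-sorter gate'' on a fixed position set can be implemented in $G$ by one partial routing into $H$ (cost $rt(G)$) followed by one run of an optimal network on $H$ (cost $st(H)$), so everything hinges on producing an oblivious $p$-sorter network with $O(\frac{n}{p}\log\frac{n}{p})$ gates. The gap is exactly at what you call the conceptual core: your monotonicity lemma is \emph{false} for the comparators you need. A standard comparator $[i{:}j]$ is monotone for the prefix-sum (dominance) order only when $j=i+1$; for long-range comparators it fails. Concretely, take $c=[1{:}3]$, $u=(0,1,0)$, $w=(1,0,0)$: the prefix sums are $(0,1,1)$ and $(1,1,1)$, so $u$ dominates (is ``more sorted than'') $w$, yet $c(u)=(0,1,0)$ and $c(w)=(0,0,1)$ have prefix sums $(0,1,1)$ and $(0,0,1)$, and comparability is destroyed. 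Since AKS consists precisely of long-range (expander-edge) comparators, your layer-by-layer induction collapses at the first layer. Nor can the lemma be replaced by some other ``more sorting cannot hurt'' principle, because that principle is false for comparator networks: the $4$-wire network $[1{:}2],[3{:}4],[1{:}3],[2{:}4],[2{:}3]$ sorts, but inserting the standard comparator $[2{:}3]$ after the first two comparators sends $(1,1,0,0)$ to $(1,0,1,0)$, which the remaining suffix leaves unsorted. Replacing a sub-matching of a layer by a full sorter on its touched positions is functionally that sub-matching followed by extra standard comparators, so it is exactly this kind of insertion; your construction therefore has no correctness proof even for the fallback bound $O(\frac{n}{p}\log n)$, and the sharpening to $\log\frac{n}{p}$ via surgery on the top levels of AKS is, as you admit, speculative on top of that.

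The paper sidesteps both problems by moving to the block level with a \emph{fixed} partition rather than per-layer groups: partition $V$ into $q=\lceil n/\lfloor p/2\rfloor\rceil$ blocks, treat each block as a single super-element, and run any oblivious $O(q\log q)$-comparison sorting algorithm on $K_q$, implementing each block-comparison by routing the two blocks into $H$ and sorting their union there, with the smaller half assigned to the lower-indexed block. Correctness then rests on the classical theorem (Knuth, TAOCP Vol.~3, \S 5.3.4) that replacing every key of a sorting network by a block and every comparator by a sort-and-split (merge--split) gate yields a valid block sorter --- a true statement about gates acting on whole blocks, in contrast to the wire-level insertion heuristic, which is false. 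This formulation also delivers the $\frac{n}{p}\log\frac{n}{p}$ count immediately, since only $O(q\log q)$ block comparisons are needed and $q=O(n/p)$, with no need to open up the internals of AKS. If you want to salvage your write-up, the minimal repair is to fix the blocks once and for all, let every gate sort a pair of blocks, and prove (or cite) the block theorem; that is essentially the paper's proof.
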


Theorems~\ref{thm:st bound diam}-\ref{thm:st using a subgraph} above will be proven in Section~\ref{sec:general bounds}.

In Section~\ref{sec:concrete families} we prove bounds on some concrete families of graphs,
including the complete $p$-partite graph, expander graphs, vertex transitive graphs, Cayley graphs, and the pyramid graph.

%%%%%%%%%%%%%%%%%%%%%%%%%%%%%%%%%%%%%%%%%%%%%%%%%
\section{Routing via Matchings}
%%%%%%%%%%%%%%%%%%%%%%%%%%%%%%%%%%%%%%%%%%%%%%%%%

In order to prove some of the results in this paper we need to define the model of routing via matchings, originally introduced by Alon et.\ al~\cite{alon1994routing}.
Given a connected labeled graph $G = (V,E)$, where each vertex $i \in V$ is initially occupied by a labeled pebble that has a unique destination $\pi(i)$, the routing time $rt(G,\pi)$ is defined as the minimum number of matchings
required to move each pebble from $i$ to its destination vertex labeled $\pi(i)$,
where pebbles are swapped along matched edges. The {\it routing number} of $G$ denoted by $rt(G)$
is defined as the maximum of $rt(G,\pi)$ over all such permutations $\pi:V \to V$.
We start with the following simple lemma.
\begin{lemma}\label{lemma: st vs rt}
For any graph $G$ and any order $\pi$ of the vertices of $G$
it holds that
\[
    \max\{rt(G),\log{|G|}\} \le st(G, \pi) \le st(G) + rt(G).
\]
\end{lemma}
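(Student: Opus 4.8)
The statement decomposes into three separate inequalities: $\log|G| \le st(G,\pi)$, $rt(G) \le st(G,\pi)$, and $st(G,\pi) \le st(G) + rt(G)$. The plan is to prove each one independently, writing $n = |G|$.

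For the bound $\log|G| \le st(G,\pi)$ I would use a cone-of-influence (light-cone) argument. For a vertex $v$ and a time step $s$, let $C_s(v) \subseteq V$ denote the set of input locations whose initial pebble value can possibly affect the value residing at $v$ after $s$ stages. Initially $C_0(v)=\{v\}$, and whenever $v$ is matched to $u$ at stage $s$ we have $C_s(v) \subseteq C_{s-1}(v)\cup C_{s-1}(u)$; since $v$ is matched to at most one vertex per stage, $|C_s(v)|$ at most doubles each stage, so $|C_t(v)| \le 2^t$ after $t = st(G,\pi)$ stages. Now fix the output location $w = \pi^{-1}(1)$ that must hold the minimum at the end. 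If some input location $x \notin C_t(w)$, then the final value at $w$ is independent of the initial value at $x$; but placing the unique global minimum at $x$ forces $w$ to output that value, a contradiction. Hence $C_t(w) = V$ and $n \le 2^t$, giving $t \ge \log_2 n$.

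For $rt(G) \le st(G,\pi)$ I would exhibit, for an arbitrary permutation $\sigma$, a routing of $\sigma$ in $st(G,\pi)$ matching steps. Take an optimal sorting network $\mathcal{S}(G,\mathcal{M},\pi)$ of depth $t=st(G,\pi)$ and feed it the specific input in which the pebble starting at vertex $j$ is given rank $\pi(\sigma(j))$; these ranks form a permutation of $\{1,\dots,n\}$. Since the network sorts, the pebble of rank $r$ ends at $\pi^{-1}(r)$, so the pebble from $j$ ends exactly at $\pi^{-1}(\pi(\sigma(j)))=\sigma(j)$, i.e.\ the network realizes $\sigma$ on this input. Because the input is now fixed, every comparator's decision (swap or not) is determined in advance, so I can replace each oriented matching $M_i$ by the sub-matching of edges that actually swap, obtaining undirected matchings that route $\sigma$ in at most $t$ steps. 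Thus $rt(G,\sigma)\le st(G,\pi)$ for every $\sigma$, and taking the maximum over $\sigma$ yields $rt(G)\le st(G,\pi)$.

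For the upper bound $st(G,\pi)\le st(G)+rt(G)$ I would first sort with respect to an optimal order and then permute into $\pi$-order. Let $\pi^\ast$ attain $st(G)=st(G,\pi^\ast)$ and run its optimal network; after $st(G)$ stages vertex $v$ holds rank $\pi^\ast(v)$ for every input, so the arrangement is now completely determined. The fixed permutation $\rho=\pi^{-1}\circ\pi^\ast$ sends the pebble currently at $u$ to $\pi^{-1}(\pi^\ast(u))$, turning the $\pi^\ast$-sorted arrangement into the $\pi$-sorted one; I can realize $\rho$ with undirected swaps in $rt(G,\rho)\le rt(G)$ matching steps, which are legal stages by Definition~\ref{def:sorting network}. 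Appending these stages gives a network of depth $st(G)+rt(G)$ that sorts with respect to $\pi$ on every input. The routine work lies in the two reductions; the only genuinely delicate point is the orientation-versus-swap issue, namely that sorting networks use oriented comparators while routing uses undirected swaps. Both reductions exploit the fact that once the input (or the intermediate arrangement) is pinned down, each comparator acts as a predetermined swap, so I must check that restricting to the edges that actually swap still yields a valid matching at each stage (it does, being a sub-matching) and that undirected swap-stages are permitted in the model (they are). I would also confirm the logarithm base in the cone argument matches the convention used elsewhere in the paper.
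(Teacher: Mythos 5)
Your proof is correct, and two of its three parts coincide with the paper's own argument: your reduction showing $rt(G,\sigma)\le st(G,\pi)$ (feeding the network the input in which the pebble at vertex $j$ has rank $\pi(\sigma(j))$ is exactly the paper's placement of the pebble ranked $i$ at vertex $\sigma^{-1}(\pi^{-1}(i))$), and your upper bound obtained by running an optimal network for an order $\pi^\ast$ attaining $st(G)$ and then routing the fixed permutation $\pi^{-1}\circ\pi^\ast$ in at most $rt(G)$ extra stages. Where you genuinely diverge is the $\log|G|$ lower bound. The paper argues information-theoretically: any oblivious sorting algorithm must make $\Omega(|G|\log|G|)$ compare-exchanges, each stage contains at most $|G|/2$ of them, hence the depth is $\Omega(\log|G|)$. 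You instead use a fan-in (cone-of-influence) argument: after $t$ stages the value at the location $\pi^{-1}(1)$ that must receive the minimum is a function of at most $2^t$ input locations, and it must depend on all $n$ of them (otherwise moving the unique minimum to an unseen location yields a contradiction), so $t\ge\log_2 n$. Your version buys something concrete: it establishes the inequality $\log_2 n\le st(G,\pi)$ with the exact constant, as the lemma literally states, whereas the paper's counting argument as written only gives an asymptotic bound; it is also self-contained, not resting on the sequential $\Omega(n\log n)$ comparison lower bound. Finally, your explicit handling of the orientation-versus-swap issue --- replacing each comparator matching by the sub-matching of edges that actually swap on the fixed input, and noting that undirected swap stages are legal in Definition~\ref{def:sorting network} --- is a detail the paper leaves implicit, and you resolve it correctly.
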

\begin{proof}
We show first that $rt(G,\sigma) \le st(G, \pi)$
for any two permutations $\pi, \sigma$ of the vertices.
Indeed, suppose that the keys of the pebbles are $\{1,\dots,|V|\}$.
For all $i \in V$ place the pebble ranked $i$ in the vertex $\sigma^{-1}(\pi^{-1}(i))$.
Then, there exists a sorting network of depth $st(G, \pi)$ that sends
the pebble ranked $i$ to $\pi^{-1}(i)$ for all $i \in V$.
That is, the pebble from the vertex $j = \sigma^{-1}(\pi^{-1}(i))$
is sent to the vertex $\pi^{-1}(i) = \sigma(j)$. Therefore,
$rt(G,\sigma) \le st(G, \pi)$ for all permutations $\sigma$,
and thus $rt(G) \le st(G, \pi)$. Second part of the lower bound follows from the standard information theoretic argument for oblivious sorting algorithms.
We know that any sorting network must make $\Omega(|G|\log|G|)$ compare-exchanges and the size of the largest matching is at most $|G|/2$, and hence $st(G,\pi) \geq \Omega(\log(|G|))$.

For the upper bound
let $\mathcal{S}(G, \mathcal{M}, \tau)$ be a sorting network on $G$
of depth $st(G) = st(G, \tau)$.
We use $\tau$ to create another sorting network $\mathcal{S}(H, M, \pi)$ of depth
at most $st(G) + rt(G)$. This is done in two stages.
First we apply the sorting network $\mathcal{S}(H, M, \tau)$.
After this stage we know that the pebble at vertex $i$ has a rank $\tau(i)$.
Next, we apply a routing strategy with $rt(G)$ steps that routes
to the permutation $\pi^{-1} \circ \tau$, i.e., sending a pebble
in the vertex $i$ to $\pi^{-1}(\tau(i))$ for all $i \in V$.
After this step the vertex $i$ contains the pebble of rank $\pi(i)$.
This proves that $st(G, \pi) \leq st(G) + rt(G)$.
\end{proof}

The above lemma implies that if we construct a sorting network for an
arbitrary sorted order on the vertices then we suffer a penalty of $rt(G)$
on the depth of our network as compared to the optimal one.

\subsection{Routing on subgraphs of $G$}
Below, we study the notion of routing a subset of the pebbles to a specific subgraph.
We start with the following lemma.
\begin{lemma}\label{lemma:routing to longest path}
	Let $T$ be a tree with diameter $d$, and let $P$ be a path of length $d$ in $T$.
    Then, we can route any set of $d$ pebbles to $P$ in $3d$ steps.
\end{lemma}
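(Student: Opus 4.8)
The plan is to exploit the fact that $P$ is a diameter path, which tightly controls how far any vertex can be from $P$. Removing the $d$ edges of $P = v_0 v_1 \cdots v_d$ from $T$ splits it into subtrees $T_0, \dots, T_d$, where $T_i$ is the component containing $v_i$; I root each $T_i$ at $v_i$. The crucial geometric observation is that $\mathrm{depth}(T_i) \le \min(i, d-i) \le \floor{d/2}$: if some $u \in T_i$ were at distance $\delta$ from $v_i$, then concatenating the $u$–$v_i$ path with the longer arm of $P$ out of $v_i$ would yield a simple path of length $\delta + \max(i, d-i)$, which cannot exceed the diameter $d$; hence $\delta \le \min(i, d-i)$. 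Thus every marked pebble sits within $\floor{d/2}$ edges of its attachment vertex on $P$, and the formula degenerates correctly at the endpoints (forcing $v_0, v_d$ to carry no off-path branches, as a longest path must).

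Given this, I would route in two phases. In Phase~1 I pull all $d$ marked pebbles out of the subtrees and onto the vertices of $P$, so that at the end of the phase the marked pebbles occupy some $d$ of the $d+1$ vertices of $P$, one pebble per vertex as always. In Phase~2 I only need to rearrange pebbles that already lie on $P$ into the desired positions; since $P$ is a path on $d+1$ vertices and $rt(P_{d+1}) = d$, this costs at most $d$ matching steps, and because the target is flexible (``any set of $d$ pebbles to $P$'') this is a plain routing on a path, immediate from the known routing number of the path.

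The content, and the main obstacle, is Phase~1 and its cost of at most $2d$. Routing within a single $T_i$ toward its root $v_i$ takes only $\mathrm{depth}(T_i) \le \floor{d/2}$ steps in the absence of congestion, and all subtrees can be processed simultaneously because they are edge-disjoint. The difficulty is congestion at the interface with $P$: a subtree may contain several marked pebbles, yet $v_i$ holds only one pebble at a time, so they cannot all land at $v_i$ at once. I would resolve this by using $P$ itself as a conveyor: each time a marked pebble reaches $v_i$, a matching step along $P$ shifts the already-deposited pebbles one vertex further along to free the slot for the next arrival, and the spare capacity $d+1 > d$ guarantees a free slot always exists somewhere on $P$. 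Bounding the combined pulling-and-shifting schedule by $2d$ is the step I expect to require the most care, since one must argue that the upward moves inside the subtrees and the sideways shifts on $P$ can be interleaved into valid matchings without stalling. Combining the two phases gives a routing of depth at most $2d + d = 3d$, as claimed.
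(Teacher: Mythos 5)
Your decomposition into subtrees $T_i$ and the depth bound $\mathrm{depth}(T_i)\le\min(i,d-i)$ are both correct, but the step you defer --- bounding the pulling-and-shifting schedule --- is the entire content of the lemma, and the intermediate bound you assert for it is in fact false. Take $d$ even, let $i=d/2$, and let $T_i$ consist of $d$ branches of length $d/2$ attached to $v_i$, with one marked pebble at the tip of each branch; the diameter is still $d$, so this is a legal instance. Every marked pebble can reach $P$ only through $v_i$ (the tree structure forces this), every such arrival is a swap along an edge $(c,v_i)$ with $c$ a child of $v_i$, and every pebble except possibly the one that finishes at $v_i$ must, after its last arrival, leave $v_i$ by a swap along a $P$-edge. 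These at least $d+(d-1)=2d-1$ swaps all use edges incident to $v_i$, hence occur in pairwise distinct rounds, and none can occur before round $d/2$, since no pebble starts closer than $d/2$ to $v_i$. So your Phase~1 needs at least $d/2+2d-2=\frac{5}{2}d-2$ rounds on this instance, not $2d$. The lemma itself survives ($\frac{5}{2}d-2\le 3d$), but your accounting does not: appending Phase~2 puts you at $\frac{7}{2}d$. You could drop Phase~2 entirely --- the lemma, and its use in Theorem~\ref{thm:st bound diam}, only asks that the pebbles occupy $P$, not prescribed positions on it --- but then you must prove that the conveyor alone finishes within $3d$, and that argument is genuinely delicate: with $d$ pebbles on the $d+1$ vertices of $P$ there is only one spare slot, and ``a free slot exists somewhere'' does not prevent stalling, because a hole must physically migrate, one edge per round, to the interface vertex that needs it, while streams from different subtrees compete for the same stretch of $P$.

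For comparison, the paper sidesteps this scheduling problem by strengthening the statement and inducting on the number of pebbles: any $k$ pebbles can be routed to \emph{any} prescribed $k$-subset $P_k\subseteq P$ in $d+2(k-1)$ rounds. The inductive step routes the $k-1$ pebbles nearest to $P_k$ by induction, and shows that the single farthest pebble can trail along its shortest path to the closest free target vertex without ever colliding with the others, costing only $2$ extra rounds --- no global congestion analysis is ever needed. To salvage your global-schedule approach you would need an invariant of comparable strength, e.g., a potential-function or delayed-start argument showing that each interface $v_i$ admits one arrival every two rounds once its first pebble arrives, and that cross-subtree collisions on $P$ can be charged to the two-rounds-per-pebble budget; as written, that is precisely the missing proof.
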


\begin{proof}
Denoting the number of the pebbles by $k$,
we prove that we can route \emph{any} set of $k$ pebbles to \emph{any} subset
$P_k \subseteq P$ of size $|P_k| =k$,
in at most $d+2(k-1)$ steps.
The proof is by induction on $k$
The case $k=1$ is trivial, since $d$ is the diameter of $T$.

For the induction step let $k \geq 2$, let $v_1,\dots,v_k$ be the
locations of the $k$ pebbles, and let $P_k$ be a subset of $P$ of size $k$.
Suppose without loss of generality that $dist(v_k, P_k) \geq dist(v_i, P_k)$
for all $i < k$. Note that we may assume that $dist(v_k, P_k)>0$, as otherwise all pebbles are already in $P_k$.
Let $u^* \in P_k$ be a vertex in $P_k$, that is the closest to $v_k$,
and let $(v_k=s_0,s_1,\dots,s_r = u^*)$ be the shortest path from
$v_k$ to $u^*$, where $r = dist(v_k, P) \leq d$.

By the inductive hypothesis there is a sequence of $d+2(k-2)$ matchings
routing the pebbles $v_1,\dots,v_{k-1}$ to $P_k \setminus \{u^*\}$.
We would like to argue
now that it is possible to route $v_k$ to $u^*$ using the two extra steps
by letting $v$ ``follow'' the other pebbles so as not to interrupt with
their routing from the induction hypothesis. This is indeed possible, as
we show below. Note that by applying the sequence of $d+2(k-2)$ matchings
from the induction hypothesis, after $d+2(k-2)$
rounds none of the vertices $v_1,\dots,v_{k-1}$ is
 any of the vertices $\{ s_{r-2},s_{r-1},s_r=u^* \}$.
This is because $(v_k=s_0,s_1,\dots,s_r = u^*)$ is the shortest path from
$v_k$ to $P_k$, and in particular $s_{r-2},s_{r-1} \notin P_k$.
By the assumption that $dist(v_k, P_k) \geq dist(v_i, P_k)$,
it follows that after the $d+2(k-2)-1$ rounds (i.e., one round before the last one)
none of the vertices in $v_1,\dots,v_{k-1}$ is in $\{s_{r-3},s_{r-2}\}$.
This is because $dist(s_{r-2},P_k) = 2$ and $dist(s_{r-3},P_k) = 3$, and so in the last
step a vertex from $\{s_{r-3},s_{r-2}\}$ could not reach $P_k$.
Analogously, for all $j \leq r$ after the $(d+2(k-2)-j)$ rounds
none of the vertices in $v_1,\dots,v_{k-1}$ is in $\{s_{r-j-2},s_{r-j-1}\}$.
Therefore, (recall that $r \leq d$) we may take the routing sequence from the inductive
hypothesis, and augment its last $r-2$ steps by moving the pebble from $v_k$ to $u^*$
along the shortest path $(v_k=s_0,s_1,\dots,s_{r-2})$, and then in the last 2 rounds
the matchings will be just singletons moving this pebble from $s_{r-2}$ to $u^*$.
\end{proof}

Motivated by the lemma above, we discuss below the question of partial routing,
where only a small number of pebbles are required to reach their destination.
\begin{definition}
	Given a graph $G=(V,E)$ let $A,B \subset V$ be two subset of vertices with $|A|=|B|$,
    not necessarily distinct. Let $\pi_{AB}$ be a bijection between $A$ and $B$.
    Routing of the pebbles from $A$ to their respective destinations on $B$
    given by $\pi_{AB}$ is defined as a partial routing in $G$,
    where each pebble in $a \in A$ in required to reach $\pi_{AB}(a) \in B$
    using the edges of $G$
    (and there are no requirements on the pebbles outside $A$).
    \begin{enumerate}
      \item Define $rt(G,A,B,\pi_{AB})$ be the minimum number of matchings
      needed to route every pebble $a \in A$ to $\pi_{AB}(a) \in B$ using the edges of $G$.
      \item Define $rt(G,A,B) = \max_{\pi_{AB}} rt(G,A,B,\pi_{AB})$.
      \item For $U \subseteq V$ define $rt_U(G) = \max_{A \subseteq V} rt(G,A,U)$.
      \item For $p \in 1,\dots, |V|$ define $rt_{p}(G) = \max_{A,B \subset V, |A| = |B| \le p}{rt(G,A,B)}$.
    \end{enumerate}
\end{definition}

Clearly, for any $n$-vertex graph $G$ we have $rt(G) = rt_n(G)$.
Some of the bounds for $rt(G)$ also holds for $rt_p(G)$.
For example, $rt_p(G) \ge d$, where $d$ is the diameter of $G$.
Furthermore, $rt_p(G) = \Theta(rt(G))$ for any $p$ if and only if
$rt(G) = \Theta(d)$. We illustrate $rt_p(G)$ by computing it
explicitly for some typical graphs. Recall that $rt(K_n) = 2$.
It is easy to see that $rt_p(K_n) = 2$ for all $p \ge 3$, and $rt_2(K_n) = 1$.
For the complete bipartite graph we have $rt_{n/2}(K_{n/2,n/2}) = 2$ and is $rt_{p}(K_{n/2,n/2}) = 4$ for $p > n/2$.

\begin{theorem}
	For any tree $T$ with diameter $d$, $rt_{p}(G) = O((d+p) \min(d, \log {n \over d}))$.
\end{theorem}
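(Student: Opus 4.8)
The plan is to prove the two halves of the bound separately and combine them with the $\min$. Throughout, fix a diametral path $P=(u_0,\dots,u_d)$ of $T$, and recall the standard structural fact that every vertex of $T$ lies within distance $\floor{d/2}$ of $P$ (if some $v$ were farther, concatenating the $v$-to-$P$ path with the longer half of $P$ would exceed the diameter). I will also use the gathering primitive behind Lemma~\ref{lemma:routing to longest path}: in any subtree of diameter at most $d$, any $k$ designated pebbles can be collected onto a given path, or onto a single root vertex, in $O(d+k)$ matchings.

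The main technical step is the bound $rt_p(T)=O((d+p)\log(n/d))$, which I would obtain by a centroid decomposition. Pick a centroid vertex $c$ of $T$, so that every connected component of $T-c$ has at most $n/2$ vertices; call these $C_1,\dots,C_t$, each a subtree of diameter at most $d$ attached to $c$. For an instance $(A,B,\pi_{AB})$ with $|A|=|B|\le p$, call a pebble \emph{crossing} if its source and destination lie in different components. There are at most $p$ crossing pebbles, and the single step of the recursion is to move every crossing pebble into the component containing its destination, leaving all other pebbles in place. Afterwards each $C_i$ holds exactly the pebbles destined for it, and I recurse on the instances $(C_i,A_i,B_i)$ in parallel; since the $C_i$ are vertex-disjoint their matchings can be run simultaneously. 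Writing $f(m)$ for the worst case of $rt_p$ over diameter-$\le d$ subtrees on at most $m$ vertices, this gives $f(n)\le f(n/2)+(\text{cost of one crossing step})$, stopped once $m=O(d)$; a subtree of size $O(d)$ is routed directly in $O(d)$ matchings (using $rt(\cdot)=O(\text{size})$ for trees~\cite{alon1994routing}). Unrolling over the $O(\log(n/d))$ levels yields the claimed bound, provided each crossing step costs $O(d+p)$.

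Establishing that single crossing step in $O(d+p)$ matchings is the crux, and the main obstacle. The difficulty is that all crossing pebbles must pass through the single vertex $c$, which participates in only one swap per matching, so the step is inherently serial in the number of crossing pebbles; simultaneously, a crossing pebble may begin as deep as $d$ inside its component. I would implement it as a three-phase ``spider'' routing about $c$: first, within each $C_i$, gather its outgoing pebbles toward $c$ via the primitive above in $O(d+k_i)$ steps (with $k_i$ their number); second, feed these pebbles through $c$ into the correct target components, pipelining so that $c$ swaps at every step; third, push arriving pebbles a short distance into their targets to clear $c$ for the pipeline. Since $\sum_i k_i\le p$ and every component has depth at most $d$ from $c$, interleaving the three phases costs $O(d+p)$ overall. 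Making this pipelining precise, so that gathering, crossing, and scattering do not block one another, is the delicate point, and the place where Lemma~\ref{lemma:routing to longest path} is invoked as a black box.

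Finally, for the complementary bound $rt_p(T)=O((d+p)d)$, which dominates when $d$ is small (for instance it recovers the tight $O(p)$ for the star $K_{1,n-1}$, where $\log(n/d)$ is large), I would avoid recursing on size and instead perform $O(d)$ global rounds of the same gather-and-redistribute operation along $P$, each costing $O(d+p)$. Since every pebble must travel total distance at most $d$, and each round advances the configuration past the congestion at shared path-vertices in the manner of odd--even transposition, $O(d)$ rounds suffice. Taking the smaller of the two bounds proves $rt_p(T)=O((d+p)\min(d,\log(n/d)))$.
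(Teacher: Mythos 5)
Your main construction for the first branch is exactly the paper's proof. You pick a centroid $c$, call a pebble crossing if its source and destination lie in different components of $T-c$, move the at most $p$ crossing pebbles through $c$ at cost $O(d+p)$, and recurse in parallel on the components, stopping once components have size $O(d)$. The paper does the same thing: the crossing pebbles are called ``improper'', they are first gathered toward the root of each component in $O(d)$ matchings, then exchanged through the centroid in pairs in $O(p)$ matchings, and the components are recursed on in parallel, giving the recurrence $T(n,d,p)\le T(n/2,d-1,p)+c_1d+c_2p$ with base case $O(d)$. Even the level of informality matches: the step you flag as delicate (pipelining the gather/cross/scatter phases so the centroid is never idle) is handled in the paper by an appeal to ``the arguments used in~\cite{alon1994routing} can be modified'', so I would not count that against you. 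Up to this point you have rederived the $O((d+p)\log(n/d))$ branch along the paper's lines.

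The genuine gap is your second branch, $rt_p(T)=O((d+p)\,d)$. What you offer there is not an argument: the ``gather-and-redistribute round along $P$'' is never defined as an operation, no invariant is stated that measures progress per round, and the one sentence of justification --- every pebble travels distance at most $d$, and rounds advance ``in the manner of odd--even transposition'' --- does not engage the real difficulty, which is serialization: up to $p$ pebbles may all need to pass through the same vertex of $P$, so total travel distance is not the bottleneck. (This is the very congestion issue you correctly identified at the centroid; here it is simply asserted away.) The paper needs no second algorithm at all: in its recurrence the diameter parameter also drops by one per level, since each component of $T-c$ is claimed to have diameter at most $d-1$, so the recursion depth is simultaneously bounded by $d$ and by $\log(n/d)$, and the single recurrence yields the $\min$. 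Your formulation deliberately keeps the diameter bound fixed at $d$ throughout the recursion (your $f(m)$ ranges over ``diameter-$\le d$ subtrees on at most $m$ vertices''); that choice is defensible --- the paper's assertion that components have strictly smaller diameter actually needs care, e.g.\ a path of length $d$ attached at its midpoint to a hub with many leaves has the hub as centroid, and the path survives as a component of diameter exactly $d$ --- but it is precisely this choice that loses the $d$ branch and forces you into the unsupported second argument. To close the gap you would either have to make your $O(d)$-round scheme precise with a progress invariant, or follow the paper and show that the diameter (or a suitable proxy, such as the depth of each component measured from its attachment vertex to the centroid) decreases along the recursion, so that the recursion depth is also $O(d)$.
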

\begin{proof}
The proof is similar to the proof used in~\cite{alon1994routing} for determining the routing number of trees. We find a vertex $r$ whose removal disconnects the tree into a forest of trees each of which is of size at most $n/2$. Let $(T_1,\ldots,T_r)$ be the set of trees in the forest, with $r \in T_1$.  For a tree $T_i$ let $S_i$ be the set of ``improper'' pebbles that need to be moved out of $T_i$. All other pebbles in $T_i$ are ``proper''.
In the first round we move all the pebbles in $S_i$ as close to the root of $T_i$ as possible, for all $i$. Using the argument used in~\cite{alon1994routing} it can be shown that for a tree with diameter $d$ this first phase can be accomplished in $2d$ steps for some constant $c_1$.
First we label each node in $T_i$ as odd or even based on their distance from $r_i$, the root of $T_i$.
In each odd round we match nodes in odd layers with proper pebbles to one of its children containing an improper pebble if one exists.
Similarly, in even rounds we match nodes in even layers with proper pebbles to one of its children containing an improper pebble if one exists.
Since $T$ has diameter $d$ any path from $r_i$ to some leaf must be of length at most $d-1$.
Now consider an improper pebble $u$ initially at distance $k$ from the root. During a pair of odd-even matchings either the pebble moves one step closer to the root or one of the following must be true: (1) another pebble from one of its sibling node jumps in front of it or (2) there is some improper pebble already in front of it. It can then be argued (we omit the details here) that after $c_1d$ matchings for some constant $c_1$  if $u$ ends up in position $j$ from $r_i$ then all pebbles between $u$ and $r_i$ must be improper.  Next we exchange a pair of pebbles between subtrees using the root vertex $r$, since at most $p/2$ pairs needs to be exchanged, the arguments used in~\cite{alon1994routing} can be modified to show that this phase also takes $c_2p$ steps for some constant $c_2$. After each pebble is moved to their corresponding destination subtrees we can route them in parallel. Noting that each tree $T_i$ has diameter at most $d-1$. Hence we have the following recurrence:
	\begin{align}
		T(n, d, p) &\le T(n/2,d-1,p) + c_1d+c_2p
	\end{align}
where $T(n,d,p)$ is the time it takes to route $p$ pebbles in a tree of diameter $d$ with $n$ vertices.
Taking $T(\cdot, d, p) = O(d)$, and solving (1) gives the stated bound of the lemma.
\end{proof}

%%%%%%%%%%%%%%%%%%%%%%%%%%%%%%%%%%%%%%%%%%%%%%%%%
\section{General Upper Bounds on $st(G)$}\label{sec:general bounds}
%%%%%%%%%%%%%%%%%%%%%%%%%%%%%%%%%%%%%%%%%%%%%%%%%

Below we prove Theorem~\ref{thm:st bound diam}
stating that if $G$ contains a simple path of length $d$, then
$st(G) = O(n \log{(n / d)})$.

\begin{proof}[Proof of Theorem~\ref{thm:st bound diam}]
    It is easy to see that if $G$ contains a simple path of length $d$,
    then $G$ has a spanning tree $T$ such that with diameter at least $d$.
	The proof follows easily from Theorem~\ref{thm:st using a subgraph}
    and Lemma~\ref{lemma:routing to longest path}.
    Indeed, in the setting of Theorem~\ref{thm:st using a subgraph}, let $H$ be
    a path of length $d$ in $T$. Then  $st(H) = d$.
    By  Theorem~\ref{thm:st using a subgraph}
    if any set of $d$ vertices can be routed to $H$ in $r$ steps,
    then $st(T) \leq O(\frac{n}{d} \log( \frac{n}{d}) \cdot (r + st(H)))$.
    By Lemma~\ref{lemma:routing to longest path} we have $r= O(d)$,
    and thus $st(G) \leq st(T) = O(n \log( n/d ))$.
\end{proof}

Next, we prove Theorem~\ref{thm:st bound max deg}.
The proof is essentially from~\cite{AS}, who proved that the acquaintance time of a $G$,
defined in~\cite{benjamini2014acquaintance}, is upper bounded by $20 \Delta n$.
The basic idea is to use an $n$ round sorting network
for $P_n$, and simulate this network in $T$ with an overhead that depends only on $\Delta$.

\begin{proof}[Proof of Theorem~\ref{thm:st bound max deg}]
    Clearly it is sufficient to prove that for a spanning subgraph $T$ of $G$
    it holds that $st(T) \leq 20 \Delta n$.
    A \emph{contour} of a tree $T$ is a closed walk on $2n-1$ vertices
    that crosses each edge exactly twice, and visits each vertex $v$ exactly
    $\deg(v)$ times.  Such a contour can be constructed by considering a
    depth-first search walk on $T$.

    Let $\Gamma$ be a contour of $T$. Consider $\Gamma$ as a path on $2n-1$ vertices,
    and let $\pi$ denote the projection from $\Gamma$ to $T$.
    We claim that for each $x \in T$
    it is possible to choose a vertex of $\Gamma$ from $\pi^{-1}(x)$ so that
    the gaps between the consecutive chosen vertices along $\Gamma$ are
    at most $3$. To do this, let $r$ be an arbitrarily chosen first vertex of $\Gamma$.
    Recall that $T$ is assumed to be a tree.  For each vertex $x \in T$
    whose distance from $r$ is even we pick the first vertex of $\Gamma$ projecting to
    $x$, and for each vertex $x \in T$ whose distance from $r$ is odd we pick the last
    one. Note that $\Gamma$
    visits each leaf of the tree exactly once. Between consecutive visits to the
    leaves the contour descends towards the root, and then ascends to the
    next leaf.  Along the descent, the vertices are visited for the last
    time, and so every other vertex is selected. Along the ascent, the
    vertices are visited for the first time, and also every other vertex is
    selected. Hence, we have at most three steps of
    $\Gamma$ between any two consecutive selected vertices.

    Consider the odd-even transposition sort in $P_n$,
    the $n$-vertex path whose vertices are denoted by $\{1,2,\dots,n\}$.
    In the odd rounds we compare the pebbles on the edges $\{(i,i+1) : i \mbox{ odd}\}$.
    In the even rounds we compare the pebbles on the edges $\{(i,i+1) : i \mbox{ even}\}$.
    It is known that after $n$ rounds the pebbles in $P_n$ are sorted~\cite{knuth1998art}.

    In order to present a sorting network in $T$ with at most $20 \Delta \cdot n$ rounds
    we emulate the foregoing sorting network in $P_n$
    by simulating each round of the transposition sort with a
    sequence of at most $20 \Delta$ rounds in $T$.

    First, consider the path $\Gamma$ with $2n-1$ vertices, and place $n$
    pebbles on $\Gamma$ in the selected vertices of $\Gamma$, so that
    the distance between any two consecutive pebbles is at most $3$.
    Our goal is to sort the $n$ pebbles on $\Gamma$,
    which, in particular, implies sorting in $T$.
    Since the pebbles are not located in consecutive vertices of the path,
    every round of the odd-even transposition sort for $P_n$ will require
    several (at most 5) rounds of sorting in $\Gamma$. We then show
    how to emulate these moves by a sorting network in $T$ with the sorted order
    defined by the linear order of the marked vertices of $\Gamma$.

    Let $i < j$ be two consecutive marked vertices of $\Gamma$, and let
    $p_i$ and $p_j$ be the pebbles in the corresponding vertices.
    In order to compare the pebbles $p_i$ and $p_j$ we can perform a sequence of
    swaps (without comparisons) in $\Gamma$ along the edges $(i,i+1), \dots, (j-2,j-1)$, which brings
    the pebble $p_i$ to the vertex $j-1$, followed a comparison step $(j-1,j)$,
    which places the pebble $\max(p_i,p_j)$ into the vertex $j$,
    and finally, performing the sequence of
    swaps $(j-1,j-2), \dots, (i+1,i)$ (again, without comparisons)
    in order to bring the pebble $\min(p_i,p_j)$ to the vertex $i$.
    The gaps between consecutive pebbles are at most $3$, and hence it takes at most
    $5$ steps on $\Gamma$ to perform such a swap.  These swaps projected on
    $T$ result in comparison of the pebbles in the vertices $\pi(i)$ and
    $\pi(j)$, leaving all other pebbles in their place.

    The difficulty is that in the graph $T$ the steps for swapping a pair of
    pebbles $p_i$ and $p_j$ could interfere with the steps for swapping
    another pair $p_{i'}$ and $p_{j'}$.  This happens if the projections to
    $T$ of the intervals $[i,j]$ and $[i',j']$ in the path $\Gamma$
    intersect.  If there are no such intersections, then all the swaps for all pairs
    could be carried out in parallel and we would have a sorting network in $T$ with $5n$ rounds.

    In order to solve this problem, we separate each round of
    odd-even transposition sort in $P_n$ into several sub-rounds,
    so that conflicting pairs of intervals are in different sub-rounds,
    and then split each sub-round into at most $5$ steps in $\Gamma$
    as described above.
    By the assumption on the maximal degree on $T$, the path
    $\Gamma$ visits each vertex of $T$ at most $\Delta$ times. Thus, since the
    intervals $[i,j]$ of $\Gamma$ that we care about in each round are vertex disjoint in
    $\Gamma$, each vertex of $T$ is contained in at most $\Delta$ such intervals.
    Each interval consists of at most $4$ vertices of $T$, and therefore each interval
    $[i,j]$ is in conflict (i.e., their projections to $T$ intersect) with at most
    $4(\Delta-1)$ other intervals $[i',j']$ participating in the current round.

    It is well known that if a graph has maximum degree $D$, then its
    chromatic number is at most $D+1$.  Applying this to the conflict graph
    of the intervals to be swapped in one round of the odd-even transposition sort
    we see that we can assign each interval $[i,j]$ of $\Gamma$ one of $4\Delta-3$
    colors, so that conflicting intervals have different colors.

    We now split each round of the transposition sort on $\Gamma$ into $4\Delta-3$
    sub-rounds, where in every sub-rounds we swap all pairs of the same color
    that are to be swapped in this round of the transposition sort in $P_n$.  Finally,
    we simulate this strategy in $P_n$ by replacing each sub-round with at
    most $5$ steps in $\Gamma$ as described above. Our coloring of the
    intervals guarantees that there are no conflicting intervals,
    and hence the comparisons can be carried out in $T$ in parallel.
    Therefore, each round of the odd-even transposition sort can
    be simulated by $5(4\Delta-3)$ rounds in $T$, and hence
    $st(T, \pi) \leq 5(4\Delta - 3) n$, where $\pi$ is the
    order defined by the natural linear order in $\Gamma$.
    This completes the proof of the theorem.
\end{proof}

Next, we prove Theorem~\ref{thm:st < n log(n)  rt(G)/nu(G)}
stating that $st(G) = O( n \log(n) \cdot \frac{rt(G)}{\nu(G)})$,
where $rt(G)$ is the routing number of $G$, and $\nu(G)$ is the size of the maximal matching in $G$.
\begin{proof}[Proof of Theorem~\ref{thm:st < n log(n)  rt(G)/nu(G)}]
	We prove the theorem by using $G$ to simulate the AKS sorting network
    on the complete graph $K_n$ of depth $O(\log(n))$.
    Specifically, we show that each stage (a matching) of the sorting network on $K_n$
    can be simulated by at most $O(\frac{n}{\nu(G)} rt(G))$ stages (matchings) in $G$.
    Let $M$ be a matching at some stage of the AKS sorting network on the complete graph.
    We simulate the compare-exchanges and swaps in $M$ by a sequence of matchings in $G$ as follows.
    First we partition the edges in $M$ into $t = \lceil n / \nu(G)\rceil$ disjoint subsets
    $M = M_1 \cup \dots \cup M_t$, where $|M_i| = \nu(G)$ for all except maybe the last set $M_t$, which can be smaller. Let $M_G$ be a maximal matching in $G$. Corresponding to each pair $(u,v) \in M_i$ we pick a distinct pair $(u',v') \in M_G$, this can always be done since the sets $M_i$ and $M_G$ are of the same size. Note that the pair $(u,v)$ may not be adjacent in $G$, and so, we route each pair $(u,v) \in M_i$ to its destination in $(u',v') \in M_G$. This can be done in $rt(G)$ steps, where each step consists of only undirected matchings. Once the pairs have been placed in to their corresponding positions we relabel the vertices such that the pair labeled $(u',v')$ is now $(u, v)$. Unmatched vertices keep their label. Since the pairs in $M_i$ are now adjacent in $G$ we can perform the compare-exchange or swap operation according to $M_i$.
    Therefore, the total number of matchings to execute the $i^{th}$ set of compare-exchanges and swaps in $M_i$ is $rt(G) + 1$ in $G$.
    We remark that the routing maintains the oblivious nature of the network,
    and the swaps are made while routing, which are data independent.
    We perform all the operations in each $M_i$ successively,
    while keeping track of the relabeling of the vertices that occur at each sub-stage.
    This implies that we can simulate $M$ using at most
    $(rt(G)+1) \cdot t = O(\frac{n}{\nu(G)} \cdot rt(G))$ matchings in $G$. Therefore, since the depth of the AKS sorting network
    on the complete graph $K_n$ is $O(\log(n))$, we conclude that
    $st(G) = O( n \log(n) \cdot \frac{rt(G)}{\nu(G)})$, as required.
\end{proof}

Next we prove Theorem~\ref{thm:st using a subgraph}, saying that
if $H$ be a vertex-induced subgraph of $G$ on $p$ vertices then
$st(G) = O\left(\frac{n}{p} \log(\frac{n}{p}) \cdot (rt(G) + st(H))\right)$.

\begin{proof}[Proof of Theorem~\ref{thm:st using a subgraph} ]
	Let us partition the vertex set $V$ of $G$ into $q = \lceil n/\lfloor p/2\rfloor\rceil$
    parts $V = A_1 \cup \dots \cup A_q$ in a balanced manner (i.e., the size
    of each $A_i$ is either $\lfloor p/2 \rfloor$ or $\lfloor p/2 \rfloor - 1$).
    Let $K_q$ be a complete graph whose vertices are identified with  $\{A_1,\ldots,A_q\}$,
    and let $S$ be an oblivious sorting algorithm with $O(q\log q)$ comparisons
    on the complete graph $K_q$.
    (Here the sequence of comparisons is performed sequentially, and not in parallel.)
    In an ordinary sorting network in each step
    we perform a compare-exchange or a swap between two matched vertices $(i,j)$
    so that if $i < j$, then the pebble in the vertex $i$ will be smaller than the pebble in $j$
    We will simulate $S$ on $G$ using a sorting network on $H$ by sorting in each stage the elements in $A_i \cup A_j$.
    That is, for $i<j$ we are going to sort the elements in $A_i \cup A_j$
    so that all the elements of $A_i$ are smaller than every element of $A_j$,
    and the elements within each subset are internally sorted. This is done
    using an optimal sorting network in $H$, which we will denote by $\mathcal{S}_H$.
	
    The key observation is that we can simulate any such compare-exchange in $G$
    between pairs of sets in $A$ in $O(rt(G) + st(H))$ steps.
    Indeed, suppose the $k^{th}$ round in $S$ compares the vertices $i<j$.
    In order to simulate this comparison we first route all the pebbles in $A_i \cup A_j$
    to the subgraph $H$ and relabel the vertices.
    This relabeling is done so that we can keep track of the vertices when sorting $H$.
    Then, we use $\mathcal{S}_H$ to sort $A_i \cup A_j$ which takes $st(H)$ steps.
    Note that if $|A_i \cup A_j| < p$ we can still use the network $S_H$ to sort it by slightly modifying the original network. Once the sorting is done we split up the sets again and appropriately relabel the vertices so that the first $|A_i|$ vertices in the sorted order on $H$ will now belong to $A_i$ and the next $|A_j|$ vertices will belong to $A_j$. If instead the $k^{th}$ comparison is actually a swap then we simply swap the labels of the multisets ($A_i$ is labeled $A_j$ and vice versa). Hence performing the above simulation takes $O(rt(G) + st(H))$ steps per compare exchange or swap operation, which gives the result of the theorem.
\end{proof}

In the proof of Theorem~\ref{thm:st using a subgraph} above we only used an oblivious sorting algorithm with $O(q\log q)$ comparisons on the complete graph $K_q$, and did not use the fact that the comparisons can be done
in parallel, e.g., using the AKS sorting network.
This is because Theorem~\ref{thm:st using a subgraph} only assumes that there is one subgraph $H$ with
small $st(H)$. If instead we assumed that there are many such subgraphs, then we could sort the $A_i$'s
in different subgraphs in parallel. This is described in the corollary below.

\begin{corollary}\label{cor:st using many subgraphs}
    Let $G=(V,E)$ be an $n$-vertex graph.
    Let $V = V_1 \cup \dots V_q$ be a partition of the vertices,
    with $|V_i| = n/q$ for all $i \in \{1, \dots, q\}$,
    and let $H_i$ be the connected subgraph induced by $V_i$ for each $i \in \{1, \dots, q\}$.
	Then
    \[
        st(G) = O \left( \log(q) \cdot (rt(G) + \max_{k \in \{1, \dots, q\}}\{st(H_k)\})\right).
    \]
\end{corollary}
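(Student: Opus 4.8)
The plan is to mirror the proof of Theorem~\ref{thm:st using a subgraph}, but to replace its \emph{sequential} sorting algorithm on $K_q$ with the \emph{parallel} AKS network, exploiting the fact that we now have a separate sorting gadget $H_k$ for every block $V_k$. Concretely, I would view each part $V_k$ as a single super-vertex of a complete graph $K_q$ and fix the AKS sorting network on $K_q$, which has depth $O(\log q)$. The target order on $G$ is the block order: the pebbles of smallest rank occupy $V_1$ (internally sorted by the order witnessing $st(H_1)$), the next block occupies $V_2$, and so on. As in Theorem~\ref{thm:st using a subgraph}, a comparator of the block network that compares $V_a$ with $V_b$ is interpreted as a \emph{merge--split}: sort the $2n/q$ pebbles currently in $V_a \cup V_b$ and place the smaller half in the lower-indexed block and the larger half in the higher-indexed one, each internally sorted. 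Correctness of this block-level simulation is exactly the (generalized) $0$--$1$ principle already used in Theorem~\ref{thm:st using a subgraph}: since merge--split commutes with thresholding and AKS sorts, the resulting network sorts all $n$ pebbles.

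The essential new point is parallelism. Each \emph{stage} of AKS on $K_q$ is a matching, i.e.\ a collection of \emph{vertex-disjoint} block pairs $(V_{a_1},V_{b_1}), \dots, (V_{a_m},V_{b_m})$ with $m \le q/2$. I would simulate an entire stage at once in $O(rt(G) + \max_k st(H_k))$ steps, rather than paying per comparator. Since the pairs are disjoint, the routing requests they generate act on disjoint source and target vertex sets, so they assemble into a single (partial) permutation of $V$, which $G$ realizes in $rt(G)$ matchings by definition of the routing number; likewise all the subgraph sorts in a stage act on disjoint vertex sets and run simultaneously in $\max_k st(H_k)$ steps. Multiplying the per-stage cost by the $O(\log q)$ depth of AKS, and adding one initial parallel block-sort of cost $\max_k st(H_k)$, yields the claimed bound $O(\log q \cdot (rt(G) + \max_k st(H_k)))$.

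The step I expect to be the main obstacle is implementing a single merge--split within this budget, because a pair $V_a \cup V_b$ holds $2n/q$ pebbles while each gadget $H_k$ sorts only $n/q$ of them; unlike in Theorem~\ref{thm:st using a subgraph}, the union does not fit into one subgraph. I would resolve this by maintaining the invariant that every block is internally sorted (established by the initial block-sort and preserved thereafter) and realizing the merge--split by a single merging round instead of a sort from scratch: route in $G$ so as to bring together the cross-block pairs dictated by a bitonic merge of the two sorted blocks (pairing the $i$-th smallest of $V_a$ with the $i$-th largest of $V_b$), perform one directed comparison step sending the smaller element toward $V_a$, and route the results back. After this one cross round the smaller $n/q$ values sit in $V_a$ and the larger in $V_b$, and it remains only to re-sort each block \emph{inside its own gadget} $H_a$ and $H_b$ in parallel, at cost $\max_k st(H_k)$. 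Thus every heavy sorting operation stays within a single size-$(n/q)$ subgraph, and the cross-block work is absorbed into the $O(rt(G))$ routing term. Verifying that the per-stage routings really do combine into one permutation realizable in $rt(G)$ steps, and that the bitonic pairing followed by per-block sorting reproduces the merge--split faithfully, are the two details I would check most carefully.
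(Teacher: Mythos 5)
Your high-level plan (simulate each AKS stage by one parallel routing plus one parallel round of work inside the gadgets) is the paper's plan, but your implementation of the block comparator has a genuine gap: the ``single directed comparison step'' is not an operation the model supports on an arbitrary graph. In a sorting network on $G$, every comparison must take place along an edge of $G$, and all comparisons performed in one stage must form a matching of $G$. Your merge--split asks to compare simultaneously the $i$-th smallest pebble of $V_a$ with the $i$-th largest of $V_b$, for all $i$ and for all (up to $q/2$) block pairs of the stage --- that is up to $(q/2)\cdot(n/q)=n/2$ disjoint pairs at once. After your routing step these pairs must therefore occupy $n/2$ vertex-disjoint edges of $G$, i.e.\ $G$ must contain a near-perfect matching, which nothing in the hypotheses guarantees: if each $H_k$ is a star on $n/q$ vertices and the star centers are joined in a path, every edge of $G$ is incident to a center, so $\nu(G)\le q$, and at most $q$ comparisons can ever be executed in a single round. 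Spreading the comparisons over $\approx n/(2\nu(G))$ rounds, each requiring fresh routing, inflates the per-stage cost far beyond $O(rt(G)+\max_k st(H_k))$; and the natural repair of dumping half the pairs into $H_a$ and half into $H_b$ and \emph{sorting} there does not implement the merge, since sorting a sub-multiset and taking its lower half is not the same as taking pairwise minima (with $0$--$1$ inputs, a pair $(0,0)$ together with a pair $(1,1)$ has pairwise minima $\{0,1\}$, but the sorted lower half is $\{0,0\}$).

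The paper resolves the size mismatch you correctly identified, but in the opposite direction: rather than keeping $q$ blocks of size $n/q$ and inventing a cross-block comparison primitive, it partitions $V$ into $2q$ blocks $A_1,\dots,A_{2q}$ of size $n/(2q)$ and runs the AKS network on $K_{2q}$, whose depth is still $O(\log q)$. Now each comparator touches exactly $|A_i\cup A_j| = n/q$ pebbles, which fit inside a \emph{single} subgraph $H_k$; since a stage has at most $q$ comparators and there are $q$ subgraphs, each comparator gets its own $H_k$, all pebbles are brought there by one (partial) permutation routed in $rt(G)$ steps, and each $A_i\cup A_j$ is sorted from scratch inside its assigned $H_k$ in $\max_k st(H_k)$ steps, exactly as in the proof of Theorem~\ref{thm:st using a subgraph}. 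All comparisons are thus performed by the networks witnessing $st(H_k)$, and no matching of $G$ outside the $H_k$'s is ever needed; no sortedness invariant or bitonic pairing is required either. Your parallel-routing observation and the block-level $0$--$1$ correctness argument are sound --- the fix is simply to halve the block size rather than to compare across blocks.
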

\begin{proof}[Proof sketch]
    The proof uses the same idea that Theorem~\ref{thm:st using a subgraph}.
	We start by partitioning the vertex set $V$ of $G$ into $2q$ parts
    $V = A_1 \cup \dots \cup A_{2q}$ of equal sizes.
    Then, we simulate oblivious sorting algorithm on $K_{2q}$ with the sets $A_i$.
    The only difference is that instead of an oblivious sorting algorithm with $O(q\log(q))$ comparisons
    on the complete graph $K_{2q}$ we use the AKS sorting network on $2q$ vertices of depth $O(\log(q))$.
    In each round of the sorting network there are at most $q$ comparisons,
    and the corresponding sorting of $A_i \cup A_j$ can be performed in parallel, one in each $H_k$
    in time $st(H_k)$.
\end{proof}

%%%%%%%%%%%%%%%%%%%%%%%%%%%%%%%%%%%%%%%%%%%%%%%%%
\section{Bounds on Concrete Graph Families}\label{sec:concrete families}
%%%%%%%%%%%%%%%%%%%%%%%%%%%%%%%%%%%%%%%%%%%%%%%%%

Below we state several results concerning the sorting time of some concrete families of graphs.

\begin{proposition}[Complete $p$-partite graph]\label{prop:st(complete p partite graph)}
	Let $G$ be the complete $p$-partite graph $K_{n/p,\ldots,n/p}$ on $n$ vertices.
    Then $st(G) = \Theta(\log n)$.
\end{proposition}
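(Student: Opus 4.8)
The lower bound is immediate from Lemma~\ref{lemma: st vs rt}: every sorting network on an $n$-vertex graph has depth at least $\log|G| = \log n$, so $st(G) = \Omega(\log n)$. All the work is in the matching upper bound $st(G) = O(\log n)$, and my plan is to read it off from Theorem~\ref{thm:st < n log(n)  rt(G)/nu(G)}, which gives $st(G) = O(n\log n \cdot rt(G)/\nu(G))$. Hence it suffices to establish two facts about the balanced complete $p$-partite graph (with $p \ge 2$, as otherwise $G$ has no edges): that it has a near-perfect matching, $\nu(G) = \Theta(n)$, and that its routing number is constant, $rt(G) = O(1)$. Substituting these into the theorem yields $st(G) = O(n\log n \cdot rt(G)/\nu(G)) = O(\log n)$, which together with the lower bound gives $st(G) = \Theta(\log n)$.

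The matching bound is routine: each part has exactly $n/p \le n/2$ vertices, so no part contains more than half of the vertices, and a balanced complete multipartite graph on $n$ vertices then has maximum matching of size $\min\{\lfloor n/2\rfloor,\, n - \max_i |V_i|\} = \lfloor n/2\rfloor = \Theta(n)$.

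For the routing number I would argue by a spanning-subgraph comparison. Partition the $p$ parts into two groups $A = V_1 \cup \dots \cup V_{\lfloor p/2\rfloor}$ and $B = V_{\lfloor p/2\rfloor+1}\cup\dots\cup V_p$. Every vertex of $A$ lies in a different part from every vertex of $B$, so all $A$--$B$ edges are present and $G$ contains the complete bipartite graph with bipartition $(A,B)$ as a spanning subgraph. Here $|A| = \lfloor p/2\rfloor\cdot \tfrac{n}{p}$ and $|B| = \lceil p/2\rceil\cdot\tfrac{n}{p}$ are within a factor $2$ of one another for every $p\ge 2$, and in particular both are $\Theta(n)$. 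Since any routing strategy for a spanning subgraph is also valid in the full graph, adding edges can only decrease the routing number, so $rt(G) \le rt(K_{|A|,|B|})$. It therefore remains to bound the routing number of a complete bipartite graph whose two sides are within a constant factor, which I claim is $O(1)$.

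The one step that needs genuine care --- and the main obstacle --- is exactly this last claim, $rt(K_{a,b}) = O(1)$ for $a \le b \le 2a$. Balance is essential here, since the star $K_{1,n-1}$ already has routing number $\Theta(n)$; the reason the claim holds in the balanced regime is that the smaller side still has $\Theta(n)$ ``parking slots'', so every pebble that must re-enter its own side (the only pebbles not deliverable by a single cross matching) can be sent across and back within $O(1)$ rounds. For the perfectly balanced case this is the bound $rt(K_{n/2,n/2}) = O(1)$ already recorded in the discussion above (following~\cite{alon1994routing}), and I would finish the near-balanced case by the same argument --- either by adapting the analysis of~\cite{alon1994routing} or by a direct two-phase routing: a first phase of $O(1)$ cross matchings moving every pebble to a temporary location on the opposite side, followed by a symmetric second phase delivering each pebble to its destination.
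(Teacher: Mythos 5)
Your proposal is correct and is essentially the paper's own proof: the paper likewise groups the parts into two halves to exhibit a spanning near-balanced complete bipartite subgraph $K_{\lfloor p/2\rfloor \frac{n}{p},\,\lceil p/2\rceil \frac{n}{p}}$, which simultaneously yields a matching of size $\Theta(n)$ and $rt(G)\le rt\bigl(K_{\lfloor p/2\rfloor \frac{n}{p},\,\lceil p/2\rceil \frac{n}{p}}\bigr)$, and then applies Theorem~\ref{thm:st < n log(n)  rt(G)/nu(G)} together with the trivial $\Omega(\log n)$ lower bound. The one step you flag as needing genuine care --- the constant routing number of a near-balanced complete bipartite graph --- is resolved in the paper exactly by your primary option, namely citing Theorem~3 of~\cite{alon1994routing} and the remark after its proof, which give $rt(K_{a,b})\le 2\lceil b/a\rceil+2\le 6$ when $a\le b\le 2a$.
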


\begin{proof}
The lower bound is trivial.
For the upper bound
note that $K_{n/p,\ldots,n/p}$
contains the bipartite graph $K_{\floor{\frac{p}{2}}\frac{n}{p}, \ceil{\frac{p}{2}}\frac{n}{p}}$.
In particular, it contains a matching of size $\nu(G) = \floor{\frac{p}{2}} \cdot n/p$ since $K_{n/p,\ldots,n/p}$.
Therefore, by Theorem~3 in~\cite{alon1994routing} and the remark after the proof,
we have $rt(G) \leq rt(K_{\floor{\frac{p}{2}}\frac{n}{p}, \ceil{\frac{p}{2}}\frac{n}{p}})
\leq 2 \ceil{ \frac{\ceil{\frac{p}{2}}}{\floor{\frac{p}{2}}} }+2 \leq 6$, and hence
by Theorem~\ref{thm:st < n log(n)  rt(G)/nu(G)}
it follows that $st(G) \leq O(\log(n))$.
\end{proof}

Recall that a graph $G$ is said to be a $(n,d,\lambda)$-expander
if it is a $d$-regular graph on $n$ vertices
and the absolute value of every eigenvalue of its adjacency matrix
other than the trivial one is at most $\lambda$.
\begin{proposition}[Expander graphs]\label{prop:st(expander)}
	Let $G$ be an $(n,d,\lambda)$-expander. Then $st(G) \leq O(\frac{d^3}{(d-\lambda)^2}\log^3(n) )$.
    In particular, if $\lambda < (1-\frac{1}{\log^c(n)})d$, then
    $st(G) \leq O(d \cdot \log^{2c+3}(n) )$.
\end{proposition}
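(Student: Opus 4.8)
The plan is to apply Theorem~\ref{thm:st < n log(n)  rt(G)/nu(G)}, which bounds $st(G) = O\!\left(n\log n \cdot \frac{rt(G)}{\nu(G)}\right)$, and to control the matching number and the routing number separately. The matching number is the easy ingredient: a $d$-regular graph has $nd/2$ edges and maximum degree $d$, so a greedy matching already has size $\nu(G) \ge \frac{nd/2}{2d-1} \ge n/4$ (and the expansion of $G$ in fact yields a near-perfect matching via Tutte's condition). Hence $\nu(G) = \Theta(n)$, and the bound of Theorem~\ref{thm:st < n log(n)  rt(G)/nu(G)} collapses to $st(G) = O(\log n \cdot rt(G))$. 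So it suffices to establish the routing estimate $rt(G) = O\!\left(\frac{d^3}{(d-\lambda)^2}\log^2 n\right)$, after which the claimed bound on $st(G)$ follows immediately.

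To bound $rt(G)$ I would route an arbitrary permutation by simulating the two-round routing of the complete graph ($rt(K_n) = 2$) along short walks of $G$, with two spectral inputs. First, the normalized gap of $G$ is $(d-\lambda)/d$, so the lazy random walk mixes to within $1/\mathrm{poly}(n)$ of uniform in $T = O\!\left(\frac{d}{d-\lambda}\log n\right)$ steps; this bounds the dilation of the routing by $D = T$ and, in particular, the diameter by $T$. Second, I would use Valiant's two-phase scheme, sending each token first to a uniform random intermediate vertex along an independent length-$T$ random walk and then on to its destination. The expander mixing property makes the $n$ walks spread evenly over the $\Theta(nd)$ edges, so a Chernoff estimate gives, with high probability, a congestion of $C = O\!\left(\frac{\log n}{d-\lambda} + \log n\right)$ over the entire schedule. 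This realizes the permutation as a packet-routing schedule of congestion $C$ and dilation $D$.

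It remains to convert this packet schedule, in which tokens may temporarily share a vertex, into a genuine routing-via-matchings schedule, in which every vertex always holds exactly one token and motion happens only along a matching. This is where the degree enters: serializing the conflicting token moves that meet at a shared vertex, of which there are at most $d$, costs an overhead polynomial in $d$ on top of $C$ and $D$. The spectral part of the target is already present in the product $C \cdot D = O\!\left(\frac{d}{(d-\lambda)^2}\log^2 n\right)$, and the remaining degree factors are meant to account for this serialization, assembling to $rt(G) = O\!\left(\frac{d^3}{(d-\lambda)^2}\log^2 n\right)$.

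I expect the genuine obstacle to be exactly this conversion step: the matching model forbids buffering, so one must schedule the swaps so that the one-token-per-vertex invariant is preserved while paying only the advertised degree overhead (and no extra spectral factors). The inputs feeding it --- the mixing time $\frac{d}{d-\lambda}\log n$ and the concentration of edge-crossings through the expander mixing lemma --- are standard. Once $rt(G)$ is in hand, the ``in particular'' clause is pure substitution: if $\lambda < \left(1 - \frac{1}{\log^c n}\right)d$ then $d - \lambda > d/\log^c n$, so $\frac{1}{(d-\lambda)^2} < \frac{\log^{2c} n}{d^2}$ and therefore $\frac{d^3}{(d-\lambda)^2}\log^3 n < d \log^{2c+3} n$.
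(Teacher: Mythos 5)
Your overall skeleton matches the paper's: invoke Theorem~\ref{thm:st < n log(n)  rt(G)/nu(G)}, lower-bound the matching number of a $d$-regular graph, and feed in a bound on $rt(G)$. Your greedy estimate $\nu(G)\ge n/4$ is correct (and in fact stronger than the bound $\nu(G)\ge n/(2d)$ that the paper settles for), and the final substitution in the ``in particular'' clause is fine. The difference, and the problem, is how the routing bound is obtained. The paper proves nothing about $rt(G)$: it simply cites the theorem of Alon, Chung and Graham~\cite{alon1994routing} that an $(n,d,\lambda)$-expander satisfies $rt(G) = O\bigl(\tfrac{d^2}{(d-\lambda)^2}\log^2 n\bigr)$, and multiplies. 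You instead try to re-derive a routing bound via Valiant's two-phase random-walk routing, and your argument has a genuine hole exactly where you flag it: the conversion of a packet-routing schedule with congestion $C$ and dilation $D$ into a routing-via-matchings schedule. In the matching model there is no buffering: every vertex holds exactly one pebble at every step, and moving a pebble across an edge necessarily displaces the pebble at the other endpoint, which then interferes with other tokens' walks. There is no off-the-shelf theorem converting a congestion-$C$, dilation-$D$ packet schedule into $O(\mathrm{poly}(d)\cdot C\cdot D)$ matching rounds, and ``serializing the at most $d$ conflicting moves at a vertex'' does not obviously terminate with only a $\mathrm{poly}(d)$ overhead, since the displaced pebbles must themselves be restored or re-routed without corrupting the schedule. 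Because the entire spectral content of the proposition lives in the bound on $rt(G)$, asserting this step as an expectation leaves the proposition unproved.

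The fix is exactly what the paper does: quote the known bound $rt(G) = O\bigl(\tfrac{d^2}{(d-\lambda)^2}\log^2 n\bigr)$ from~\cite{alon1994routing}, whose proof works natively in the matching model (routing recursively via the expander mixing lemma) and never passes through a store-and-forward schedule. It is worth noting that combining that citation with your sharper estimate $\nu(G)=\Theta(n)$ actually yields $st(G) = O\bigl(\tfrac{d^2}{(d-\lambda)^2}\log^3 n\bigr)$, a factor $d$ better than the stated proposition; the paper loses that factor by using only $\nu(G)\ge n/(2d)$.
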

\begin{proof}
	Recall from~\cite{alon1994routing} that if
    $G$ is an $(n,d,\lambda)$-expander, then
    $rt(G) = O\left( \frac{d^2}{(d-\lambda)^2}\log^2(n) \right)$.
    Therefore, since any $d$-regular graph contains a matching of size $n/2d$
    it follows from Theorem~\ref{thm:st < n log(n)  rt(G)/nu(G)}
    that $st(G) \leq O(\frac{d^3}{(d-\lambda)^2}\log^3(n) )$.
\end{proof}

\begin{proposition}[Vertex transitive graphs]\label{prop:st(vertex transitive)}
	Let G be a verter transitive graph with $n$ vertices of degree $\polylog(n)$. Then $diam(G) = O(\polylog(n))$ if and only if $st(G) =  O(\polylog(n))$.
\end{proposition}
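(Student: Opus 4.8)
The plan is to prove the two directions separately, noting that the ``only if'' direction needs neither vertex-transitivity nor the degree bound, whereas the ``if'' direction is where all the hypotheses enter. For $st(G) = O(\polylog(n)) \Rightarrow diam(G) = O(\polylog(n))$ I would simply chain $diam(G) \le rt(G) \le st(G)$. The second inequality is immediate from Lemma~\ref{lemma: st vs rt}, since $rt(G) \le st(G,\pi)$ for every $\pi$ and hence $rt(G) \le \min_\pi st(G,\pi) = st(G)$; the first is the elementary fact (already recorded for $rt_p$) that a matching moves a pebble at most one step, so routing across a diametric pair needs at least $diam(G)$ matchings. Thus $diam(G) \le st(G) = O(\polylog(n))$.

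For the converse, assume $diam(G) = D = O(\polylog(n))$, and let $d = O(\polylog(n))$ be the common degree (a vertex-transitive graph is $d$-regular). First I would dispose of the matching size: in any $d$-regular graph a maximal matching leaves an independent set $U$ of unmatched vertices, and counting the $d|U|$ edges from $U$ into the $2\nu(G)$ matched vertices against their total degree gives $d|U| \le 2d\,\nu(G)$, i.e.\ $\nu(G) \ge n/4 = \Omega(n)$. Feeding this into Theorem~\ref{thm:st < n log(n)  rt(G)/nu(G)} yields $st(G) = O(n\log(n)\cdot rt(G)/\nu(G)) = O(\log(n)\cdot rt(G))$, so it suffices to show $rt(G) = O(\polylog(n))$.

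The heart of the argument is therefore a polylogarithmic routing bound for a vertex-transitive graph of polylogarithmic diameter. I would obtain it by reusing the estimate $rt(G) = O(\frac{d^2}{(d-\lambda)^2}\log^2(n))$ of Alon et al.~\cite{alon1994routing} (the same bound that drives the proof of Proposition~\ref{prop:st(expander)}), fed by a spectral-gap bound of the form $\frac{d}{d-\lambda} = O(D^2)$ for vertex-transitive graphs, with $\lambda$ the second eigenvalue of the (lazy) walk so that bipartiteness causes no issue. To prove that gap bound I would use a Poincaré/canonical-path inequality for the random walk: assign to each ordered pair $(x,y)$ a geodesic of length at most $D$, so that the relaxation time is at most $(\text{max path length})\times(\text{max edge-congestion})$. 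The total geodesic length over all $n^2$ pairs is at most $n^2 D$ spread over the $nd$ directed edges, giving \emph{average} congestion $O(nD/d)$; multiplying by path length $\le D$ would give relaxation time $O(D^2)$, hence $\frac{d^2}{(d-\lambda)^2} = O(D^4)$ and $rt(G) = O(D^4\log^2(n)) = O(\polylog(n))$, whence $st(G) = O(\polylog(n))$.

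The main obstacle is exactly the claim that geodesic congestion can be balanced down to its average value $O(nD/d)$. For an edge-transitive graph this is automatic, but a merely vertex-transitive graph can have several edge-orbits, and naively chosen geodesics might overload one orbit. The plan is to symmetrize the chosen family of paths over the (vertex-transitive) automorphism group---equivalently, to route a fractional all-pairs flow and average it over $\mathrm{Aut}(G)$---so that vertex-transitivity precludes any persistent bottleneck; making this balancing rigorous across distinct edge-orbits is the technical crux. (A purely combinatorial alternative avoids the spectrum entirely: route the worst-case permutation by Valiant's two-phase scheme through uniformly random intermediate vertices, where vertex-transitivity again balances the geodesic congestion to $O(D + \log n)$ with dilation $O(D)$, and then convert this congestion/dilation schedule into $O(\polylog(n))$ matching rounds. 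The degree hypothesis is convenient but, as both routes suggest, not essential to the bound.)
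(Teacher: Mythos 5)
Your first paragraph and your reduction are fine: $\mathrm{diam}(G)\le rt(G)\le st(G)$ handles the easy direction, and the bound $\nu(G)\ge n/4$ for regular graphs combined with Theorem~\ref{thm:st < n log(n)  rt(G)/nu(G)} and the $rt$ bound of~\cite{alon1994routing} correctly reduces everything to a spectral-gap estimate --- which is exactly the paper's own pipeline (run through Proposition~\ref{prop:st(expander)}). The genuine gap is precisely the step you yourself flag as ``the technical crux,'' and it cannot be closed in the form you state. Your intermediate claim --- that geodesic congestion can be balanced down to its global average $O(nD/d)$, giving relaxation time $O(D^2)$ and $\frac{d}{d-\lambda}=O(D^2)$ --- is false for vertex-transitive graphs. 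Take $G=K_m\,\square\,C_m$: it is vertex-transitive with $d=m+1$ and $D=\Theta(m)$, but its second adjacency eigenvalue is $(m-1)+2\cos(2\pi/m)$, so $\frac{d}{d-\lambda}=\Theta(m^3)=\Theta(dD^2)$, not $O(D^2)$. The obstruction is exactly the edge-orbit issue you mention: the cycle-type edges are only a $\Theta(1/d)$ fraction of all edges yet must carry all long-range flow, so their congestion is forced to exceed the average by a factor of $d$, and no averaging over $\mathrm{Aut}(G)$ can prevent this. The paper does not attempt to prove expansion at all here; it invokes the theorem of Babai and Szegedy~\cite{babai1992local} that a vertex-transitive graph of diameter $D$ has vertex expansion $\Omega(1/D)$, hence $d-\lambda\ge d/\polylog(n)$, and then applies Proposition~\ref{prop:st(expander)}. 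As written, your proposal replaces that citation with an unproven (and, in its quantitative form, incorrect) balancing claim, so the heart of the ``if'' direction is missing.

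The good news is that your symmetrization idea is repairable, at the cost of one factor of $d$, which is harmless here. In a vertex-transitive graph every edge orbit has size at least $n/2$: for any edge $\{u,v\}$ and any vertex $w$ there is an automorphism carrying $u$ to $w$, so the orbit contains an edge incident to every vertex. Averaging a fixed family of geodesics over $\mathrm{Aut}(G)$ makes the fractional congestion of an edge $e$ equal to the average of the original congestion over the orbit of $e$, which is at most $n^2D/(n/2)=2nD$ on \emph{every} edge; with dilation at most $D$, the canonical-path (Poincar\'e) inequality then gives a normalized gap of $\Omega(1/(dD^2))$, i.e.\ $d-\lambda=\Omega(1/D^2)$, and hence $rt(G)=O(d^2D^4\log^2 n)=O(\polylog(n))$ via~\cite{alon1994routing}, which suffices since $d,D=\polylog(n)$. (Your point about using the lazy walk to sidestep bipartiteness is well taken --- indeed the paper's own proof is silent on this, and for bipartite vertex-transitive graphs such as the hypercube the quantity $\lambda$ of Proposition~\ref{prop:st(expander)} equals $d$.) So the architecture is sound, but you must either prove the orbit-size lemma and rerun your Poincar\'e argument with the weaker $O(dD^2)$ bound, or simply cite Babai--Szegedy as the paper does; the proposal as it stands proves neither.
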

\begin{proof}
It is trivial that $diam(G) \leq st(G)$.
For the other direcion, Babai and Szegedy~\cite{babai1992local} showed that for vertex-transitive
graphs if the diameter of $G$ is $O(\polylog(n))$ then its vertex expansion is
$\Omega(1/\polylog(n))$. Therefore, $\lambda \le d(1 - 1/\polylog(n))$, where $d = O(\polylog(n))$ is the degree of the graph.
Therefore, by Proposition~\ref{prop:st(expander)} we have $st(G) = O(\polylog(n))$.
\end{proof}

Next we bound the sorting number of cartesian product of two given graphs. Recall that for two graphs $G_1(V_1,E_1)$ and $G_2(V_2,E_2)$ their Cartesian product is $G_1 \square G_1$
is the graph whose set of vertices is $V_1 \times V_2$ and
$((u_1,u_2),(v_1,v_2))$ is an edge in $G_1 \square G_1$ if either
$u_1 = v_1$, $(u_2,v_2) \in E_2$ or $(u_1,v_1) \in E_1$, $u_2 = v_2$.
Our next result bounds the sorting number of a product graph in terms of sorting numbers of its components.

\begin{figure}[h]
	\includegraphics[width=6.5cm]{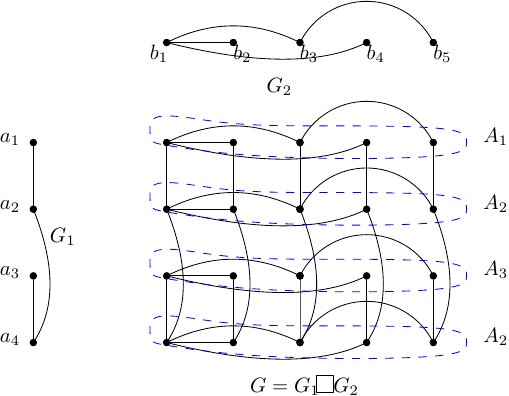}
	\centering
	\caption{The product graph $G = G_1 \square G_2$. The rows highlighted by blue regions represents copies $G_2$. }
\label{fig:product}
\end{figure}

\begin{corollary}\label{cor:st graph product}
    Let $G_1 = (V_1,E_1)$ and $G_2 = (V_2,E_2)$ be two graphs and $G = G_1 \square G_2$.
    Then
	\[
        st(G) \leq
        O(\min(\log|V_1|(rt(G)+ st(G_2)), \log|V_2|(rt(G) + st(G_1))).
    \]
\end{corollary}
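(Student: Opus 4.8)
The plan is to derive this directly from Corollary~\ref{cor:st using many subgraphs} by exploiting the product structure of $G = G_1 \square G_2$ to obtain a natural balanced vertex partition whose induced subgraphs are isomorphic copies of one of the two factors. Since the bound in that corollary already carries the routing number $rt(G)$ of the host graph, and since the statement we want is symmetric in the two factors, I expect to prove each of the two terms in the minimum by a single application of the corollary, one per factor, and then take the smaller of the two bounds.

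First I would establish the term $\log|V_1| \cdot (rt(G) + st(G_2))$. Recall that the vertex set of $G$ is $V_1 \times V_2$. For each fixed $u_1 \in V_1$, consider the set $V_{u_1} = \{u_1\} \times V_2$. By the edge rule of the Cartesian product, two vertices $(u_1,u_2)$ and $(u_1,v_2)$ are adjacent in $G$ exactly when $(u_2,v_2) \in E_2$; hence the subgraph of $G$ induced by $V_{u_1}$ is isomorphic to $G_2$, and in particular it is connected (assuming $G_2$ is connected, as is needed for $G$ itself to be connected). This gives a partition $V_1 \times V_2 = \bigcup_{u_1 \in V_1} V_{u_1}$ into $q = |V_1|$ parts, each of size exactly $|V_2| = n/q$, so the equal-size hypothesis of Corollary~\ref{cor:st using many subgraphs} is satisfied with no rounding. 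Since every induced subgraph $H_{u_1}$ satisfies $st(H_{u_1}) = st(G_2)$, the corollary yields $st(G) = O(\log|V_1| \cdot (rt(G) + st(G_2)))$.

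The second term follows by the symmetric argument, partitioning instead along the first coordinate: for each fixed $u_2 \in V_2$ the set $V_1 \times \{u_2\}$ induces a copy of $G_1$, giving a balanced partition into $|V_2|$ connected parts and, via Corollary~\ref{cor:st using many subgraphs}, the bound $st(G) = O(\log|V_2| \cdot (rt(G) + st(G_1)))$. Taking the better of the two bounds gives the claimed inequality. The argument is essentially bookkeeping, so I do not anticipate a genuine obstacle; the only points that need care are verifying that the induced subgraphs really are isomorphic to the intended factor (which is immediate from the edge rule of $\square$) and that both applications use the routing number $rt(G)$ of the full product rather than of the factors, which is precisely what appears in the final bound.
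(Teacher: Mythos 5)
Your proposal is correct and follows essentially the same route as the paper: partition $V_1 \times V_2$ into the $|V_1|$ fibers $\{u_1\}\times V_2$, each inducing a copy of $G_2$, apply Corollary~\ref{cor:st using many subgraphs}, argue symmetrically for the other factor, and take the minimum. Your write-up is in fact slightly more careful than the paper's, since you verify the equal-size and connectivity hypotheses explicitly and avoid the paper's unnecessary appeal to the bound $rt(G) \le \min\{rt(G_1),rt(G_2)\} + rt(G_1) + rt(G_2)$, which is not needed when the final bound is stated in terms of $rt(G)$ itself.
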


\begin{proof}
We will prove the corollary in terms of $rt(G)$, and then use Theorem~4 in~\cite{alon1994routing}
saying that $rt(G) \le \min\{rt(G_1),rt(G_2)\} + rt(G_1) + rt(G_2)$.
Since $G$ has $|V_1|$ vertex disjoint subgraphs that are copies of $G_2$ we can apply
Corollary~\ref{cor:st using many subgraphs} with these $q=|V_1|$ subgraphs,
and all $H_i$ being isomorphic to $G_2$.
Therefore, we get
\[
    st(G) \leq O(\log(|V_1|) \cdot (rt(G) +st(G_2) )
\]
The bound $st(G) \leq O(\log(|V_2|) \cdot (rt(G) +st(G_1) )$
follows using the same argument by changing the roles of $G_1$ and $G_2$.
\end{proof}

As an example of an application of the above corollary consider the $d$-dimensional mesh $M_{n,d}$ with $n^d$ vertices. We know that $rt(M_{n,d}) \le 2dn$ since $M_{n,d} = M_{n,d-1} \times P_n$.
Therefore, $st(M_{n,d}) \le O( \log(n^{d-1}) \cdot (rt(M_{n,d})  + st(P_n)) ) = O(d n \log(n))$.
Although this bound is not optimal (it is known that $st(M_{n,d}) = O(dn)$),
we still find this example interesting.

\subsection{The Pyramid Graph}
A 1-dimensional pyramid with $m$-levels is defined as the complete binary tree of $2^{m} -1$ nodes,
where the nodes in each level are connected by a path (i.e., a one-dimensional mesh).
We treat the apex (root) to be at level 0, and subsequent levels are numbered in ascending order.
A 2-dimensional  pyramid is shown in Figure~\ref{fig:pyramid}. In this case each level $l$ is a $2^l \times 2^l$ square mesh.
Similarly a $d$-dimensional pyramid having $m$ levels is denoted by $\triangle_{m,d}$, where the level $l$ is a $d$-dimensional
regular mesh of length $2^{l}$ in each dimension.
Clearly, the size of layer $l$ is $|M_l| = 2^{ld}$
and the number of vertices in the graph is $N = |\triangle_{m,d}|= \sum_{l=0}^{m-1} 2^{ld}= \frac{2^{md}-1}{2^{d}-1}$.
We treat a vertex $x \in M_l$ as a vector in $[1,2^l]^{d}$ which denotes its position on the mesh.

\begin{figure}[h]
	\includegraphics[width=4.5cm]{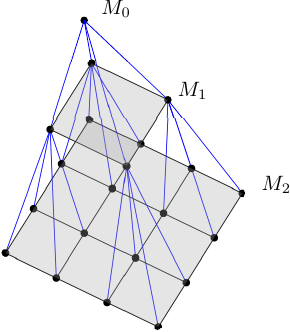}
	\centering
	\caption{A pyramid $\triangle_{3,2}$ in 3-dimension}
\label{fig:pyramid}
\end{figure}

In this section we prove an upper bound on $st(\triangle_{m,d})$. In order to derive this bound we make use of the following bound on the routing number of pyramid.

\begin{lemma}\label{lemma: rt pyramid}
	Let $\triangle_{m,d}$ be the $d$-dimensional pyramid graph with $m$-levels. Then $rt(\triangle_{m, d}) = O(dN^{1/d})$.
\end{lemma}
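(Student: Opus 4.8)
The plan is to induct on the number of levels $m$ and prove the recursive estimate
\[
rt(\triangle_{m,d}) \le rt(\triangle_{m-1,d}) + C\, d\, 2^{m}
\]
for a universal constant $C$; summing the resulting geometric series (the base case $\triangle_{1,d}$ is a single vertex, with $rt=0$) then yields $rt(\triangle_{m,d}) \le C d \sum_{l=2}^{m} 2^{l} = O(d 2^{m})$. To match the stated form I would finally note that the bottom level dominates the vertex count, $N = \frac{2^{md}-1}{2^{d}-1} = \Theta(2^{(m-1)d})$, so that $N^{1/d} = \Theta(2^{m})$ and $O(d2^{m}) = O(dN^{1/d})$.

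For the reduction step, decompose $\triangle_{m,d}$ into its bottom level $B$ (a $d$-dimensional mesh of side $2^{m-1}$) and the upper sub-pyramid $P' = \triangle_{m-1,d}$ spanned by levels $0,\dots,m-2$; the two are joined only by the tree edges connecting each bottom vertex to its unique parent in level $m-2$. Given an arbitrary permutation to route, classify each pebble as bottom-destined or upper-destined; exactly $|B| = 2^{(m-1)d}$ are bottom-destined and $|P'|$ are upper-destined. I would then route in phases: (i) use the bottom mesh as an exchange fabric---its routing number is $O(d2^m)$ by the product bound $rt(M_{n,d}) \le 2dn$ quoted earlier---to rearrange the pebbles currently in $B$ so that each upper-destined pebble sitting in $B$ is carried underneath the correct level-$(m-2)$ ancestor (and, symmetrically, bottom-destined pebbles are brought toward their final slots), spreading them so that no level-$(m-2)$ block is overloaded; (ii) perform an interface exchange across the parent edges, lifting upper-destined pebbles into $P'$ while lowering the displaced bottom-destined pebbles into $B$; (iii) finish the bottom-destined pebbles with one more mesh route in $B$ (cost $O(d2^m)$) and recurse on $P'$ to place the upper-destined pebbles, the two being independent once the partition across the cut is correct.

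The reason the cost telescopes is a counting fact: the number of pebbles that must cross the interface upward is at most $|P'| = \Theta(2^{(m-2)d})$, spread over the $2^{(m-2)d}$ level-$(m-2)$ vertices, i.e.\ $O(1)$ crossings per vertex on average, and the same holds for downward crossings; after the balancing in phase (i) each parent edge therefore needs to carry only $O(1)$ pebbles, so phase (ii) is cheap and phase (i) is the $O(d2^m)$ bottleneck. The main obstacle is making phases (i)--(ii) rigorous, and in particular controlling congestion at the interface: the upper-destined pebbles outnumber a single level of $P'$, so they cannot all be parked at level $m-2$ before recursing. I expect to handle this either by pipelining the lift with the first rounds of the recursive routing on $P'$ (so that ascending pebbles climb out of level $m-2$ as new ones arrive), or---cleaner---by replacing the black-box recursion with a single global ascent/descent phase and charging each pebble's vertical moves to the interface it crosses; since the per-interface load is $O(1)$ and the per-interface balancing cost is $O(d2^{l})$ at level $l$, these charges sum geometrically to $O(d2^m) = O(dN^{1/d})$.
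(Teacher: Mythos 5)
Your high-level plan is sound in outline: bounding $rt(\triangle_{m,d})$ by $O(d2^m)$ and observing $N^{1/d}=\Theta(2^m)$ is exactly how the paper's bound arises, and the counting fact you isolate (pebbles that must move up past a level number at most a constant times that level's size, because level sizes grow by a factor $2^d\ge 2$) is precisely the counting the paper uses. However, the argument as written has a genuine gap, and it is the one you yourself flag: the recurrence $rt(\triangle_{m,d}) \le rt(\triangle_{m-1,d}) + O(d2^m)$ requires phases (i)--(ii) to deposit \emph{every} upper-destined pebble inside $P'$ before the recursion starts, and this cannot be done by interface swaps alone. There are roughly $\frac{2^d}{2^d-1}\cdot 2^{(m-2)d} > 2^{(m-2)d}$ upper-destined pebbles but only $2^{(m-2)d}$ slots in level $m-2$; once the first round of crossings fills that level, further crossings require pebbles to climb out of level $m-2$, i.e., routing \emph{inside} $P'$ interleaved with the crossings --- exactly what the black-box recursion was supposed to absorb. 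Neither proposed repair is carried out: pipelining destroys the recurrence you sum, and the ``global ascent/descent with charging'' variant never exhibits a schedule of matchings under which pebbles climbing through a level avoid colliding with each other and with pebbles terminating there (the constraint is vertex congestion, since a vertex participates in at most one swap per round, so ``$O(1)$ load per interface on average'' does not by itself yield an $O(d2^m)$ schedule).

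For comparison, the paper closes this hole with a structural device rather than pipelining. It passes to a spanning subgraph $\triangle'_{m,d}$ (a multi-grid): each level keeps its mesh, but each vertex keeps only one edge to the level below, so the surviving inter-level edges decompose into \emph{vertex-disjoint} maximal vertical paths, with $\phi_k = n_{m-k-1}-n_{m-k-2}$ paths of length $k$. The permutation is split (via Alon et al.) into at most two permutations consisting of $2$-cycles and $1$-cycles; each transposition between levels $i<j$ is assigned its own vertical path of length $m-i-1$, and the two pebbles swap along it while all intermediate pebbles stay put, so all pairs move in parallel with no interaction and no level-by-level staging. Your counting fact reappears as $|P_i| \le n_i \le 2(n_i - n_{i-1}) = 2\phi_{m-i-1}$, so two parallel vertical phases suffice, and three intra-level mesh-routing phases (dominated by the bottom mesh, $O(dN^{1/d})$; the vertical phases cost only $O(m)$) complete the routing. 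If you want to salvage your write-up, the multi-grid is the missing idea: it converts your problematic lifting step into a single conflict-free parallel swap phase and eliminates the recursion entirely.
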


\begin{proof}
	\begin{figure}[h]
		\includegraphics[width=4.5cm]{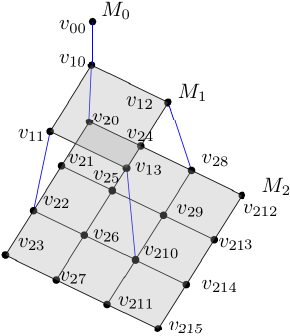}
		\centering
		\caption{The graph $\triangle'_{3,2}$ after stripping way edges from $\triangle_{3,2}$}
    \label{fig:subpyramid}
	\end{figure}
	
Given the pyramid $\triangle_{3,2}$ consider a subgraph $\triangle'_{3,2}$ as shown in Figure~\ref{fig:subpyramid}. In literature this graph is sometimes refer to as a multi-grid, see for example~\cite{leighton2014introduction}.  As we move down from the apex we remove all but the ``first'' edge from the set of edges that connects a vertex to its neighbors in the level below. The remaining edges that connects two adjacent layers will be referred to as vertical edges. These edges can be grouped into disjoint vertical paths as shown by the blue lines in Figure~\ref{fig:subpyramid}.  The above construction naturally generalizes in higher dimensions. Clearly $rt(\triangle_{m,d}) \le rt(\triangle'_{m,d})$ where $ \triangle'_{m,d} $ is the multi-grid obtained from $ \triangle_{m,d} $.  We shall show $rt(\triangle'_{m,d}) = O(dN^{1/d})$.
	
	Let $\pi$ be some input permutation. Without loss of generality we assume that $\pi$ consists only of 2-cycles or 1-cycles. From~\cite{alon1994routing} we know that any arbitrary permutation can be written as a composition of at most two such permutations. In order to route $\pi$ we first route the pebbles into their appropriate levels and then route within these levels. Routing consists of five rounds where in the odd numbered rounds we route within the levels and in the even numbered rounds we use the vertical paths to route between the levels. The first four rounds are used to move the pebbles to their appropriate destination level.
	
	Let $v_{ij}$ be the $j^{th}$ node at level $i$, where $j \in [0, n_i-1]$. Let $\phi_k$ be the number of maximal vertical paths of length $k$. For example, in Figure~\ref{fig:subpyramid} we have $\phi_2=1$ and $\phi_1=3$. In general in a $\triangle'_{m,d}$, $\phi_k = n_{m-k-1}-n_{m-k-2}$ for $k \in [1,m-2]$ and $\phi_{m-1}=1$. We group the cycles in $\pi$ based on their source and destination level (in case of 1-cycles the source and destination levels are the same). Let $P_{ij}$ ($i < j$) be the set of pebble pairs that need to be moved from level $i$ down to level $j$ and vice-versa and $P_{ii}$ be the set of pebbles that stay in level $i$. Let $\mu_{ij} = |P_{ij}|$. Let $P_i=\bigcup_{i<j}P_{ij}$ be the set of pebble pairs that move a pebble up to level $i$.
	We shall only use disjoint vertical paths of length $m-i-1$ to route the pebbles in $P_{i}$.
	During either even round each path of length $m-i-1$ will be used, for some $j$, to swap two pebbles between levels $i$ and $j$; all other pebbles on that path will not move. As an example consider the case in Figure~\ref{fig:subpyramid}. Suppose $\pi(v_{00}) = v_{21}$. Then during the intra-level routing on the first round we will move the pebble at $v_{21}$ to $v_{20}$. All intermediate nodes on this path, which in this case is just $v_{10}$ will be ignored (i.e., a pebble on these nodes will return to their original position at the end of the round). The four pebbles $\{v_{10},v_{11},v_{12},v_{13}\}$ will only use the three paths of length 1 to move to the bottom level (if necessary). In general $|P_i|=\sum_{j > i}{\mu_{ij}} \le n_i \le 2(n_{i}-n_{i-1}) = 2\phi_{m-i-1}$. Hence we need at most two rounds of routing along these vertical paths to move all pebbles in $P_i$.
	
	Routing within the levels (which happens in parallel) is dominated by the routing number of the last level which is known to be $O(dn_{m}^{1/d}) = O(dN^{1/d})$ (for example, we can use Corollary 2 of Theorem 4 in~\cite{alon1994routing}). Hence the three odd rounds take $O(dN^{1/d})$ in total. In the even rounds routing happens in parallel along the disjoint vertical paths. The routing time in this case is $O(m)$. Since, $N^{1/d} = \Omega(2^m)$, the even rounds do not contribute to the overall routing time, which remains $O(dN^{1/d})$, as claimed by the theorem.
	
\end{proof}

Using the above theorem we give an upper bound on the sorting number of the pyramid.

\begin{theorem}[pyramid]\label{thm: st pyramid}
	The sorting number for a pyramid $\triangle_{m,d}$ is $O(d\  N^{1/d})$.
\end{theorem}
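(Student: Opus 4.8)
The plan is to apply Theorem~\ref{thm:st using a subgraph} with the vertex-induced subgraph $H$ taken to be the bottom level $M_{m-1}$ of the pyramid. The key observation driving the whole argument is that $M_{m-1}$ contains almost all of the vertices of $\triangle_{m,d}$: setting $p := |M_{m-1}| = 2^{(m-1)d}$ and recalling $N = \frac{2^{md}-1}{2^d-1}$, we get $N/p = \frac{2^{md}-1}{2^{md}-2^{(m-1)d}} \le \frac{2^d}{2^d-1} \le 2$. Hence $N/p = O(1)$ and $\log(N/p) = O(1)$, which is precisely what makes Theorem~\ref{thm:st using a subgraph} deliver a bound with no extra logarithmic overhead.

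First I would verify that the induced subgraph of $\triangle_{m,d}$ on the vertex set $M_{m-1}$ is exactly the $d$-dimensional mesh of side length $2^{m-1}$: any edge of the pyramid joining two bottom-level vertices must be a mesh edge within that level, since the only other edges are the vertical tree edges, and those connect distinct levels. Thus $H$ is the connected mesh $M_{2^{m-1},d}$, and by Kunde's result~\cite{kunde1987optimal} stated in the introduction we have $st(H) = O(d \cdot 2^{m-1}) = O(d N^{1/d})$, using that $2^{m-1} = \Theta(N^{1/d})$.

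Next I would feed these estimates into Theorem~\ref{thm:st using a subgraph}, which gives
\[
    st(\triangle_{m,d}) = O\!\left( \tfrac{N}{p}\log(\tfrac{N}{p}) \cdot (rt(\triangle_{m,d}) + st(H)) \right)
    = O\!\left( rt(\triangle_{m,d}) + st(H) \right),
\]
where the second equality uses $\frac{N}{p}\log(\frac{N}{p}) = O(1)$. Finally, substituting the routing bound $rt(\triangle_{m,d}) = O(d N^{1/d})$ from Lemma~\ref{lemma: rt pyramid} together with $st(H) = O(d N^{1/d})$ yields $st(\triangle_{m,d}) = O(d N^{1/d})$, as claimed.

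I expect the genuinely substantive ingredient to be already in hand: Lemma~\ref{lemma: rt pyramid} does the real work of bounding the routing number, and the mesh sorting bound is imported from Kunde. The remaining points are purely bookkeeping — confirming that the bottom level is a constant fraction of $N$ (so the $\frac{N}{p}\log(\frac{N}{p})$ prefactor collapses to a constant), that the induced subgraph on $M_{m-1}$ really is a clean mesh with no stray edges, and that $2^{m-1} = \Theta(N^{1/d})$ so that $rt(\triangle_{m,d})$, $st(H)$, and the final bound are all expressed on the common scale $d N^{1/d}$.
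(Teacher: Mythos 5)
Your proposal is correct, but it takes a genuinely different route from the paper. The paper proves Theorem~\ref{thm: st pyramid} by a bespoke oblivious algorithm: it fixes a layer-by-layer sorted order, then repeatedly (i) routes all pebbles of the sub-pyramid $\triangle_{m-1,d}$ down to the bottom mesh $M_{m-1}$, sorts them there, and routes them back, (ii) sorts $M_{m-1}$, and (iii) aligns the $n_{m-2}$ smallest pebbles of $M_{m-1}$ directly beneath $M_{m-2}$ and merges by pairwise compare-exchanges; correctness is then argued via the 0--1 principle. You instead observe that the bottom mesh already contains at least half of all vertices ($N/p \le 2$) and black-box everything into Theorem~\ref{thm:st using a subgraph}, combined with Lemma~\ref{lemma: rt pyramid} and Kunde's mesh bound --- the same two substantive ingredients the paper uses, but with the hand-built merging machinery replaced by the paper's own general simulation theorem. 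Your route is shorter and more modular, and it makes clear that the pyramid bound is really a corollary of the routing lemma; the paper's explicit construction is self-contained and does not lean on the generic simulation. One small caution: in your regime the displayed formula of Theorem~\ref{thm:st using a subgraph} is degenerate if read literally, since $\frac{N}{p}\log(\frac{N}{p}) \to 0$ as $d$ grows, which would absurdly beat the $\Omega(\log N)$ lower bound of Lemma~\ref{lemma: st vs rt}. To be airtight you should quote what the proof of that theorem actually yields: a prefactor of $q \log q$ comparisons with $q = \lceil N/\lfloor p/2\rfloor\rceil \le 5$ blocks, each simulated in $O(rt(\triangle_{m,d}) + st(M_{m-1}))$ matchings. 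This prefactor is $O(1)$, and it gives exactly your conclusion $st(\triangle_{m,d}) = O(rt(\triangle_{m,d}) + st(M_{m-1})) = O(d N^{1/d})$.
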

\begin{proof}
	Let $\triangle_{i,d}$ denote the sub-pyramid from level 0 to $i$ and let $M_{i}$ be the $d$-dimensional mesh at level $i$. Let $\pi_i$ be the sorted order on the mesh $M_i$. Note that $\pi_i: [1,2^i]^{d} \to [n_i]$ is a bijection and $\pi_0$ is the identity permutation of order 1. Next we define a sorted order $\pi$ for the pyramid $\triangle_{m,d}$ based on the $\pi_i$'s. In $\pi$ we assume the layers are sorted among themselves in ascending order starting from the apex. So the vertex labeled (with respect to $M_i$) $i$ on layer $j$ has a global rank $\pi(i) = \pi_j(i) + |\triangle_{j-1,d}|$.  Recall that $st(M_i) = O(dn_i^{1/d})$ which is due to Kunde\cite{kunde1987optimal} where he used the general snake-like ordering. From Lemma~\ref{lemma: st vs rt} we see that this bound still holds if we replace the snake-like ordering with some arbitrary permutation. Obviously in this case $rt(M_{i}) = \Theta(st(M_{i}))$.  Next we describe the matchings $\mathcal{M}$ of sorting the network $\mathcal{S}(\triangle_{m,d},\mathcal{M},\pi)$ in terms of an oblivious sorting algorithm described below.

    \medskip
	\begin{enumerate}
    \item[] $\mathcal{S}(\triangle_{m,d},\mathcal{M},\pi)$	
    \item Route all pebbles of $\triangle_{m-1,d}$ to $M_{m-1}$ and sort them using the mesh.
	\item Route these pebbles back to $\triangle_{m-1,d}$ such that they are in sorted order (according to $\pi$).
	\item Sort the mesh $M_{m-1}$ according to $\pi_{m-1}$.
    \item Route a pebble of rank $i \le n_{m-2}$ at position $x_i \in M_{m-1}$ to $y_i \in M_{m-1}$ where $$y_i[j] = 2\pi_{m-2}^{-1}{(n_{m-2}+1-i)}[j]-1$$ Let $Y = (y_1,\ldots,y_{n_{m-2}})$.
    \item Merge $Y$ with $M_{m-2}$ using pair-wise compare-exchanges, where $y_i$ is compared with $z \in M_{m-2}$ such that $\pi_{m-2}(z) = i$.
    \item Repeat 1-5.
    \item Repeat 1-3.

	\end{enumerate}

\paragraph*{Running time}
Note that the number of times we route on $\triangle_{m,d}$ is 6. Also sorting on the mesh $M_{m-1}$ occurs 6 times. We know that both routing and sorting on a mesh takes $O(dN^{1/d})$ steps. From Lemma~\ref{lemma: rt pyramid} we see that routing on $\triangle_{m,d}$ also takes $O(dN^{1/d})$ steps. So the total contribution of all the steps except 4 and 5 is $O(dN^{1/d})$. It is easy to see that step 4 also takes $O(dN^{1/d})$ and the step 5 can be accomplished in constant time. Putting it all together we see that $st(\triangle_{m,d}) = O(dN^{1/d})$ as claimed.

\paragraph*{Proof of correctness} Next we give proof sketch that $\mathcal{S}(\triangle_{m,d},\mathcal{M},\pi)$ is a sorting network. Clearly the algorithm is oblivious, hence we invoke the 0-1 principle~\cite{knuth1998art} and assume that our pebbles are all 1's and 0's.  Before the  execution of step 7 if every pebble in $\triangle_{m-1,d}$ is smaller than every pebble in $M_{m-1}$ then after step 7 we shall have our desired sorted order. So lets assume to the contrary it is not. Then there must be some pebbles $x \in M_{m-1}$ that suppose to be in $\triangle_{m-1,d}$. If that is the case then $x$ must be a 0 otherwise $x$ is $\ge$ every pebble in $\triangle_{m-1,d}$ and we are done. Now let us look at step 4 and 5. In step 4 we route the set of $n_{m-2}$ smallest pebbles in $M_{m-1}$ such that the $i^{th}$ smallest pebble is at some vertex of $M_{m-1}$ which  is directly connected to the vertex in $M_{m-2}$ that has the $i^{th}$ largest pebble of $M_{m-2}$. Since $x$ was not exchanged during both the iteration of step 4 and 5 then $x$ must be larger than at least $2n_{m-2}$ elements in $\triangle_{m,d}$, but then $x$ should not belong to $\triangle_{m-1,d}$ (since $|\triangle_{m-1,d}| \le 2n_{m-2} - 1$ for any $d$) contradicting our assumption. Hence, after step 6 we see that all pebbles of $\triangle_{m-1,d}$ must be smaller than every pebble of $M_{m-1}$ hence sorting these pebbles independently in the final step gives the desired sorted order.
\end{proof}

%------------------------------------------------------------------------------
%
%	
\bibliography{mybibfile}
%
%-----------------------------------------------------------------------------

\end{document}